\algnewcommand\And{\textbf{and}}
\algnewcommand\Or{\textbf{or}}
\algnewcommand\Not{\textbf{not}}
\algnewcommand\In{\textbf{in}}
\algnewcommand\Each{\textbf{each}}
\newtheorem{theorem}{Theorem}[section]          
\newtheorem{proposition}[theorem]{Proposition} 
\newtheorem{lemma}[theorem]{Lemma}             
\newtheorem{conjecture}[theorem]{Conjecture}  
\newtheorem{definition}[theorem]{Definition}      
\newenvironment{proof}{{\em Proof.}}{\hspace*{\fill}$\Box$\par\vspace{3mm}}
\newcommand{\squishlist}{
 \begin{list}{$\bullet$}
  { \setlength{\itemsep}{0pt}
     \setlength{\parsep}{3pt}
     \setlength{\topsep}{3pt}
     \setlength{\partopsep}{0pt}
     \setlength{\leftmargin}{2.5em}
     \setlength{\labelwidth}{1em}
     \setlength{\labelsep}{0.5em} } }
\newcommand{\squishlisttwo}{
 \begin{list}{$\triangleright$}
  { \setlength{\itemsep}{0pt}
     \setlength{\parsep}{0pt}
    \setlength{\topsep}{0pt}
    \setlength{\partopsep}{0pt}
    \setlength{\leftmargin}{2em}
    \setlength{\labelwidth}{1.5em}
    \setlength{\labelsep}{0.5em} } }
\newcommand{\squishend}{
  \end{list}  }
\definecolor{verbgray}{gray}{0.9}
\definecolor{shadecolor}{rgb}{.91, .91, .91}
\definecolor{bordercolor}{rgb}{.8, .8, .6}
\definecolor{ultramarine}{rgb}{0, 0.125, 0.376}
 \definecolor{arsenic}{rgb}{0.23, 0.27, 0.29}
 \definecolor{beige}{rgb}{0.96, 0.96, 0.86}
\definecolor{amber}{rgb}{1.0, 0.75, 0.0}
\definecolor{orange}{rgb}{1.0, 0.49, 0.0}
\definecolor{dandelion}{rgb}{0.94, 0.88, 0.19}
  \definecolor{indiagreen}{rgb}{0.07, 0.53, 0.03}
  \definecolor{huntergreen}{rgb}{0.21, 0.37, 0.23}
\newcommand{\blue}[1] {\textcolor{blue}{#1}}
\definecolor{shadecolor}{rgb}{.9, .9, .9}
 \colorlet{framecolor}{ultramarine}
  \colorlet{exframecolor}{orange}
    \newenvironment{frshaded*}{%
    \MakeFramed {\advance\hsize-\width \FrameRestore}}%
    {\endMakeFramed}
    \newcounter{examplecounter}
\newenvironment{frshaded2*}{%
    \MakeFramed {\advance\hsize-\width \FrameRestore}}%
    {\endMakeFramed}
\journal{Theoretical Computer Science}
\newcommand{\Lang}{\mathcal{L}}
\def\swap{\textit{swap}}
\newcommand{\rank}{{\textit{rank}}}
\renewcommand{\epsilon}{\varepsilon}
\newcommand{\PNF}{\mathrm{PNF}}
\newcommand{\LPN}{{\mathcal L}_{\textrm{PN}}}
\newcommand{\pnw}{\textit{pnw}}
\newcommand{\Oh}{{\cal O}}
\newcommand{\Exp}{{\it Exp}}
\begin{document}
\begin{frontmatter}
\title{Generating a Gray code for prefix normal words in amortized polylogarithmic time per word}

\author[elte]{P{\'e}ter Burcsi}%
\ead{bupe@inf.elte.hu}

\author[unipa]{Gabriele Fici}%
\ead{gabriele.fici@unipa.it}

\author[univr]{Zsuzsanna Lipt{\'a}k\corref{cor1}}%
\ead{zsuzsanna.liptak@univr.it}

\author[leic]{Rajeev Raman}%
\ead{r.raman@leicester.ac.uk}

\author[guelph]{Joe Sawada}%
\ead{jsawada@uoguelph.ca}

\address[elte]{Dept. of Computer Algebra, E{\"o}tv{\"o}s Lor{\'a}nd University, Budapest, Hungary}
\address[unipa]{Dip.\ di Matematica e Informatica, Universit{\`a} degli Studi di Palermo, Italy}
\address[univr]{Dip.\ di Informatica, Universit{\`a} degli Studi di Verona, Italy}
\address[leic]{Dept.\ of Informatics, University of Leicester, UK}
\address[guelph]{School of Computer Science, University of Guelph, Canada}

\cortext[cor1]{Corresponding author}

\begin{abstract}
A prefix normal word is a binary word with the property that no substring has more $1$s than the prefix of the same length.  
By proving that the set of prefix normal words is a bubble language, we can exhaustively list all prefix normal words of length $n$ as a combinatorial Gray code, where successive strings differ by at most two swaps or bit flips. This Gray code can be generated in $\Oh(\log^2 n)$ amortized time per word, while the best generation algorithm hitherto has $\Oh(n)$ running time per word. We also present a membership tester for prefix normal words, as well as a novel characterization of bubble languages.
\end{abstract}

\begin{keyword} prefix normal words, binary languages, combinatorial Gray code, combinatorial generation, jumbled pattern matching
\end{keyword}

\end{frontmatter}
\section{Introduction}
\label{sec:intro}

A binary word of length $n$ is  {\em prefix normal} if for all $1 \leq k \leq n$, no substring of length $k$ has more $1$s than the prefix of length $k$. 
For example, the following are the 14 prefix normal words of length $n=5$:
\begin{equation*}\label{eq:LPN5}
11111, 11110, 11101, 11100, 11011, 11010, 11001, 11000, 10101, 10100, 10010, 10001, 10000, 00000.
\end{equation*}

The word $10011$, for instance, is not prefix normal because the substring $11$ has more $1$s than the prefix $10$; similarly $11100110110$ is not prefix normal, because the substring $11011$ has more $1$s than the prefix of length $5$. 
The number $\pnw(n)$ of prefix normal words for $n=1,2,\ldots , 12$ is 
\[ 2, \  3, \  5, \  8,  \ 14, \  23, \   41, \  70,  \ 125,  \ 218,  \ 395,  \mbox{ and } 697,  \]
respectively.   This enumeration sequence is included as sequence A194850 in \emph{The On-Line Encyclopedia of Integer Sequences} (OEIS)~\cite{oeis}, listing $\pnw(n)$ up to $n=50$. It is not difficult to show that $\pnw(n)$ grows exponentially. Some bounds and partial enumeration results were presented in~\cite{BFLRS17}, and it was conjectured there that $\pnw(n) = 2^{n - \Theta(\log^2 n)}$. This conjecture was recently proved by Balister and Gerke in~\cite{BG19}. Finding a closed form formula or generating function for $\pnw(n)$, however, remains an open problem.  

Prefix normal words were originally introduced  in~\cite{FL11} by two of the current authors, in the context of {\em Binary Jumbled Pattern Matching (BJPM)}: Given a binary string $w$ of length $n$, and a pair of non-negative integers $(x,y)$, decide whether $w$ has a substring with $x$ $1$s and $y$ $0$s. While the online version of this problem can be solved naively in $O(n)$ time, the indexed version has attracted much attention during the past decade~\cite{IJFCS12,BurcsiCFL12_ToCS,MoosaR_JDA12,GiaGrab_IPL13,PhTRS-A14,ChanL15,AmirCLL_ICAPL14,CicaleseLWY14,AmirAHLLR16,GagieHLW15,KociumakaRR17,CunhaDGWKS17,ADKN20}. As was shown in~\cite{FL11,BFLRS17}, every binary word $w$ can be assigned two canonical prefix normal words, called its {\em prefix normal forms}, which can then be used to answer BJPM queries in constant time. 

\subsection{Our contributions}

In this paper, we deal with the question of {\em generating} all prefix normal words of a given length $n$. In combinatorial generation, the aim is to exhaustively list all instances of a combinatorial object. Typically, the number of these instances grows exponentially, and time is measured per object, and {\em excluding} the time for outputting the objects. For an introduction to combinatorial generation, see~\cite{Ruskeybook}. 

The current best generation algorithm for prefix normal words runs in $\Oh(n)$ time per word~\cite{CLR18}. Our algorithm improves on this considerably, using amortized $\Oh(\log^2 n)$ time per word. %
It is based on the theory of {\em bubble languages}~\cite{RSW12,SW12,Wi09}, an interesting class of binary languages defined by the following property: ${\cal L}$ is a bubble language if, for every word  in ${\cal L}$, replacing the first occurrence of $01$ (if any)  by $10$ results in another word in ${\cal L}$~\cite{RSW12,SW12}.  Many important languages are bubble languages, including binary necklaces, Lyndon words, and $k$-ary Dyck words\footnote{Those languages are actually $10$-bubble, while  prefix normal words are $01$-bubble: the difference is simply exchanging the role of $0$ and $1$ in the definition, see~\cite{RSW12,SW12}.\label{fn:1}}.  
A generic generation algorithm for bubble languages was given in~\cite{SW12}, yielding  \emph{cool-lex} Gray codes for each  subset of a bubble language containing all strings of a fixed length  and weight (number of 1s). In general, a \emph{(combinatorial) Gray code} is an exhaustive listing of all instances of a combinatorial object such that successive objects in the listing are ``close'' in some well-defined sense.  In the case of \emph{cool-lex} order, the strings differ by at most two swaps.  The generic algorithm's efficiency depends only on a language-dependent subroutine called an \emph{oracle}, which in the best case leads to CAT (constant amortized time) generation  algorithms.

In the following, we show that the set of all prefix normal words forms a bubble language; it is the first new and interesting language shown to be a bubble language since the original exposition~\cite{RSW12, SW12}.  
We develop an oracle for prefix normal words and apply the generic generation algorithm to obtain a cool-lex ordering of prefix normal words with length $n$ and weight $d$.  
Concatenating together these lists  in increasing order of weight, we obtain a Gray code for all prefix normal words of length $n$ where successive words differ by at most two swaps or by a swap and a bit flip.  We then present an  optimized oracle for prefix normal words, and, based on recent results from~\cite{BG19}, we prove that our new generation algorithm runs in amortized time $O(\log^2 n)$ per word. Even though the previous $\Oh(n)$ time per word algorithm of~\cite{CLR18} also provided a Gray code  for prefix normal words (albeit with respect to a different measure of closeness), we are achieving a very considerable improvement in running time. 

As an example, the listing of prefix normal words of length $n=7$ that results from our algorithm, partitioned by weight, 
is given in Table~\ref{table:pnw7}. 
\begin{table}
 \begin{center}
 \begin{tabular}{c@{\hskip 0.2in}c@{\hskip 0.2in}c@{\hskip 0.2in}c@{\hskip 0.2in}c@{\hskip 0.2in}c@{\hskip 0.2in}c@{\hskip 0.2in}c}
 $d=0$ &$d=1$ & $d=2$ & $d=3$ & $d=4$ & $d=5$ & $d=6$ &  $d=7$ \\ \hline
 0000000  &   1000000   	& 1010000 	&  1101000 	&  1101100   	&  1110110   	&  1110111  & 1111111 \\
 		&			& 1001000 	&  1010100 	&  1110100 	&  1111010   	&  1111011  \\
		&			& 1000100 	&  1100100 	&  1101010 	&  1101101   	&  1111101 \\
		&			& 1000010 	&  1010010 	&  1100110 	&  1110101   	&  1111110 \\
		&			& 1000001 	&  1100010 	&  1110010 	&  1101011   	&   \\
		&			& 1100000 	&  1010001 	&  1101001 	&  1110011   	&	\\ 	
		&			&		 	&  1001001 	&  1010101 	&  1111001   	&  \\
		&			&		 	&  1100001 	&  1100101 	&  1111100   	& \\
		&			&		 	&  1110000 	&  1100011 	&  		   	& \\
		&			&		 	&  		 	&  1110001 	&  		   	& \\
		&			&		 	&  		 	&  1111000	&		   	& \\
\end{tabular}
 \end{center}
 \caption{All prefix normal words of length $7$ as output by our algorithm.\label{table:pnw7}}
 \end{table}
 
\bigskip

A second contribution of this paper is a {\em new characterization of bubble languages}. We show that bubble languages can be described in terms of a closure property in the computation tree of a simple recursive generation algorithm for {\em all} binary strings. We believe that this view could aid other researchers in applying the powerful tool of bubble languages and their accompanying Gray codes. In fact, it was the discovery that prefix normal words formed a bubble language that  led to an efficient generation algorithm and  Gray code for our language.

\bigskip
 
The final part of the paper deals with {\em membership testing}, i.e.\ deciding whether a given binary word is prefix normal. Several quadratic-time membership testers for prefix normal words were given in~\cite{FL11,BFLRS17}. The best worst-case time tester can be obtained by using the connection to Indexed Binary Jumbled Pattern Matching (BJPM), for which the current best algorithm, by Chan and Lewenstein, runs in $O(n^{1.864})$ time~\cite{ChanL15}. We present a new membership tester for prefix normal words which applies a simple two-phase approach and is conjectured to run in average-case $O(n)$ time, where the average is taken over all words of length $n$.

\subsection{Related work}

In addition to the connection to jumbled indexing, 
prefix normal words  are also increasingly being studied for their own sake. Enumeration and language-theoretic results were given by Burcsi  et al.\ in~\cite{BFLRS17}, and Balister and Gerke strengthened some results in~\cite{BG19}: in particular, they proved a conjecture about the asymptotic growth behaviour of the number of prefix normal words and gave a new result about the maximal size of the equivalence classes. Cicalese et al.\ gave a generation algorithm in~\cite{CLR18}, with linear running time per word, and studied infinite prefix normal words in~\cite{CLR19}. Prefix normal words and prefix normal forms have been applied to a certain family of graphs by Blondin-Mass\'e et al.~\cite{BM18}, and were shown to pertain to a new class of languages connected to the Reflected Binary Gray Code by Sawada et al.~\cite{SWW17}. Very recently, Fleischmann et al.\ presented some results on the size of the equivalence classes~\cite{FKNP20}.

\subsection{Overview}

The paper is organized as follows. In Section~\ref{sec:preliminaries}, we give the necessary terminology and some basic facts about prefix normal words, and we develop a result on the average critical prefix length of a prefix normal word. This result will later be used in the analysis of our generation algorithm. In Section~\ref{sec:simple}, we give a simple generation algorithm which, based on the result of~\cite{BG19}, is proved to run in amortized $\Oh(n)$ per word. In Section~\ref{sec:bubble}, we present our novel view of bubble languages. In Section~\ref{sec:gray}, we introduce our new generation algorithm, which uses the bubble framework. In Section~\ref{sec:member}, we present the new membership tester. We close with some open problems in Section~\ref{sec:conclusion}.

\bigskip

Some of the results contained in this paper were  presented in a preliminary form at CPM 2014~\cite{BFLRS_CPM14} and FUN 2014~\cite{BFLRS_FUN14}. In particular, the generation algorithm of Sec.~\ref{sec:gray} was originally presented in~\cite{BFLRS_CPM14}, where we proved that it ran in amortized $\Oh(n)$ time per word, and conjectured amortized $\Theta(\log n)$ time per word. Based on the result of~\cite{BG19} on the asymptotic number of prefix normal words, we have been able to prove the amortized $\Oh(\log^2 n)$ running time per word. 

\section{Preliminaries}
\label{sec:preliminaries}

A {\em binary word} (or {\em string}) $w=w_1\cdots w_n$ over $\Sigma=\{0,1\}$ is a finite sequence of elements from $\Sigma$. Its length $n$ is denoted by $|w|$, and the $i$-th symbol of a word $w$ by $w_{i}$, for  $1\leq i\leq |w|$. 
We denote by $\Sigma^n$ the set of words over $\Sigma$ of length $n$, by $\Sigma^{*} = \cup_{n\geq 0} \Sigma^n$ the set of all words over $\Sigma$, and by $\epsilon$ the empty word. 
Let $w\in \Sigma^{*}$. If $w=uv$ for some $u,v\in\Sigma^{*}$, we say that $u$ is a \emph{prefix} of $w$ and $v$ is a \emph{suffix} of $w$. A \emph{substring} of $w$ is a prefix of a suffix of $w$. A {\em binary language} is any subset $\cal L$ of $\Sigma^*$.    
We denote by $|w|_c$ the number of occurrences in $w$ of character $c\in\{0,1\}$. The number of $1$s in $w$, $|w|_1$, is also called the {\em weight} of $w$. For a binary language $\Lang$, let $\Lang(n)$ denote the subset of all strings in $\Lang$ with length $n$, and $\Lang(n,d)$ that of all strings in $\Lang$ with length $n$ and weight $d$. 

We denote by $\swap(w,i,j)$ the string obtained from $w$ by exchanging the characters in positions
$i$ and $j$. 

We define combinatorial Gray codes, following~\cite[ch.\ 5]{Ruskeybook}:  
Given a set of combinatorial objects ${\cal S}$ and a relation $C$ on ${\cal S}$ (the closeness relation), a {\em combinatorial Gray code} for ${\cal S}$ is a listing $s_1, s_2, . . . , s_{|{\cal S}|}$ of the elements of ${\cal S}$, such that $(s_i,s_{i+1}) \in C$ for $i = 1,2,...,|{\cal S}|-1$. If we also require that $(s_{|{\cal S}|},s_1) \in C$, then the code is called {\em cyclic}.

\subsection{Prefix normal words}

Let $w = w_1w_2\cdots w_n$ be a binary word. For each $i=0,1,\ldots,n$, we define  $P(w,i) = |w_1\cdots w_i|_1$, the weight of the prefix of length $i$, 
and $F(w,i) = \max \{|u|_1 : u \text{ is a substring of } w \text{ and } |u|=i\}$, the maximum weight of $i$-length substrings of $w$. 
The function $F$ is sometimes called {\em maximum-ones function}, while in the context of compact data structures, function $P$ is often called $\rank_1(w,i)$~\cite{NavMaek07}.

\begin{definition}
A word $w\in \Sigma^*$ is called {\em prefix normal} if for all $1\leq i \leq |w|$, $F(w,i) = P(w,i)$. We denote by $\LPN$  the language of prefix normal words, and by $\pnw(n) = |\LPN(n)|$, the number of prefix normal words of length $n$. 
\end{definition}

In~\cite{FL11,BFLRS17} it was shown that for every word $w$ there exists a unique word $w'$, called its {\em prefix normal form}, 
such that for all $1\leq i \leq |w|$, $F(w,i) = F(w',i)$, and $w'$ is prefix normal. 
We give the formal definition: 

\begin{definition}
Given a word $w\in \{0,1\}^n$, the {\em prefix normal form of $w$}, denoted $\PNF(w)$, is the prefix normal word $w'$ given by $w'_i = F(w,i)-F(w,i-1)$, for $i=1,\ldots,n$. Two words $w,v$ are {\em prefix normal equivalent} if $\PNF(w) = \PNF(v)$. 
\end{definition}

As an example, the word $w = 11100110110$ has the maximum-ones function $F(w,\cdot) = 0,1,2,3,3,4,4,5,$ $5,6,7,7,$ 
as can be  checked easily. It is furthermore not difficult to see that for all $i<n$: $F(w,i+1) = F(w,i)$ or $F(w,i+1) = F(w,i)+1$. 
Thus the sequence of first differences $F(w,i+1) - F(w,i)$ yields a binary word, in this case the word $11101010110$. 
In  Table~\ref{table:classes5} we list all prefix normal words of length $5$ followed by the set of binary words with this prefix normal form.

\begin{table}[ht]
\begin{small}
\begin{raggedright}
\begin{tabular}{*{2}l @{\hspace{6mm}}||@{\hspace{6mm}} *{2}l}
$\LPN \cap \Sigma^5$ \quad  & Words with this prefix normal form  & $\LPN \cap \Sigma^5$  \quad  & Words with this prefix normal form\\
\hline &&&\rule[-2pt]{0pt}{3pt}\\
$11111$ & \{$11111$\} & $11000$ & \{$11000,01100,00110,00011$\}\\
$11110$ & \{$11110$, $01111$\}& $10101$ & $\{10101\}$\\
$11101$ & \{$11101$, $10111$\}& $10100$ & $\{10100, 01010, 00101\}$\\
$11100$ & \{$11100$, $01110$, $00111$\}& $10010$ & $\{10010, 01001\}$\\
$11011$ & \{$11011$\}& $10001$ & $\{10001\}$\\
$11010$ & \{$11010, 10110, 01101,01011$\}& $10000$ & $\{10000, 01000, 00100, 00010, 00001\}$\\
$11001$ & \{$11001,10011$\}& $00000$ & $\{00000\}$ \\
&&&\\
\hline \vspace{4mm}
\end{tabular}
\end{raggedright}
\caption{All prefix normal words of length 5 and their equivalence classes.\label{table:classes5}}
\end{small}
\end{table}

The next lemma lists some properties of prefix normal words which will be needed in the following. Proofs can be found in~\cite{BFLRS17}. 

\begin{lemma}[\cite{BFLRS17}]\label{lemma:pnw_basics}
Let $w$ be a binary word. 
\begin{enumerate}
\item $w$ is prefix normal if and only if all of its prefixes are prefix normal. 
\item If $w$ is prefix normal, then so is $w0$. 
\item Let $w$ be prefix normal. Then the word $w1$ is prefix normal if and only if, for every suffix $u$ of $w$,  $|u|_1 < P(w,|u|+1)$. 
\item $w$ is prefix normal if and only if $\PNF(w)=w$. 
\end{enumerate}
\end{lemma}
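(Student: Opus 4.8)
The plan is to treat the four parts separately, leaning throughout on two elementary facts. First, the length-$i$ prefix of $w$ is itself a length-$i$ substring, so $F(w,i)\ge P(w,i)$ for every $i$; hence prefix normality is equivalent to the reverse inequality $F(w,i)\le P(w,i)$. Second, $F(w,\cdot)$ is nondecreasing and increases by at most $1$ per step, since extending a substring by one symbol adds at most one $1$ --- the observation already noted before the definition of $\PNF$, which is what makes $\PNF(w)$ a well-defined binary word. With these in hand, each statement reduces to a short comparison of $F$ and $P$.

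For part 1 the forward direction is the only content: if $v$ is the length-$k$ prefix of $w$, then $P(v,i)=P(w,i)$ for $i\le k$, and every substring of $v$ is a substring of $w$, so $F(v,i)\le F(w,i)$; combining with $F(v,i)\ge P(v,i)=P(w,i)=F(w,i)$ (using prefix normality of $w$) forces equality, while the converse is immediate since $w$ is a prefix of itself. For part 2, appending a $0$ changes neither the prefix weights nor the maximum-ones values at lengths $i\le|w|$: any length-$i$ substring of $w0$ using the new $0$ equals $u0$ for a length-$(i-1)$ substring $u$ of $w$, whence $|u0|_1=|u|_1\le F(w,i-1)\le F(w,i)$, so $F(w0,i)=F(w,i)=P(w,i)=P(w0,i)$; at length $|w|+1$ the unique substring $w0$ has weight $|w|_1=P(w0,|w|+1)$.

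Part 3 carries the real bookkeeping, and I expect it to be the main obstacle. The plan is to classify the length-$i$ substrings of $w1$ (for $i\le|w|$) into those lying inside $w$, which contribute at most $F(w,i)=P(w,i)$ by prefix normality of $w$ and so impose no new constraint, and those ending at the appended $1$, which are exactly $u1$ for $u$ a suffix of $w$ of length $i-1$ and contribute $|u|_1+1$. Hence $w1$ is prefix normal iff $|u|_1+1\le P(w,i)=P(w,|u|+1)$ for every such $u$, i.e.\ the stated strict inequality $|u|_1<P(w,|u|+1)$; the case $i=|w|+1$ is automatic since $w1$ is then the unique substring. The delicate points are matching the index $i$ to the suffix length $|u|=i-1$, letting $u$ range over the suffixes of length $0$ through $|w|-1$ (the empty suffix giving $0<P(w,1)$, i.e.\ $w_1=1$), and reading $P(w,|u|+1)$ only when $|u|+1\le|w|$.

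Finally, part 4 is cleanest via telescoping. If $w$ is prefix normal, then $F(w,i)-F(w,i-1)=P(w,i)-P(w,i-1)=w_i$ for all $i$, so $\PNF(w)=w$ directly from its definition. Conversely, if $\PNF(w)=w$, then summing $w_j=F(w,j)-F(w,j-1)$ over $j\le i$ telescopes with $F(w,0)=0$ to give $P(w,i)=F(w,i)$ for every $i$, so $w$ is prefix normal.
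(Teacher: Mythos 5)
Your proof is correct; the paper itself gives no proof of this lemma, deferring to the cited reference~\cite{BFLRS17}, and your argument via the two elementary facts ($F\ge P$ always, and $F$ increases by $0$ or $1$ per step) is the standard one used there. You also correctly handle the one delicate point in part~3, namely that the suffix $u=w$ imposes no constraint (so the condition effectively ranges over proper suffixes, down to and including the empty one).
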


We refer the interested reader to~\cite{BFLRS17} for more on prefix normal words.

\subsection{Critical prefix length}

It will often be useful to write binary words $w\neq 1^n$ as  $w=1^s0^t\gamma$, where $s\geq 0, t\geq1$ and
$\gamma$ is either $\epsilon$ or a binary word beginning with 1. In other words, $s$ is the length of the first, possibly empty, $1$-run of $w$, $t$ is the length of the first $0$-run, and $\gamma$ the remaining, possibly empty, suffix. Note that this representation is unique. 

\begin{definition}
Let $w \in \{0,1\}^n \setminus \{1^n\}$, $w = 1^s0^t\gamma$, where $0\leq s, 1\leq t$ and $\gamma\in 1\{0,1\}^* \cup \{\epsilon\}$. We refer to $1^s0^t$ as $w$'s {\em critical prefix}, and denote by $cr(w) = s+t$ the {\em critical prefix length} of $w$, with $cr(1^n)=n$. 
\end{definition}

For example, the critical prefix length of $11101010110$ is $4$, that of $11111000000$ is $11$, and that of $00101110110$ is $2$.

\begin{lemma}
The expected critical prefix length of a binary string $w$ of length $n$ is $3 - \frac{n+3}{2^n}$. 
\end{lemma}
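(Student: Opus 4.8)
The plan is to use the tail-sum (layer-cake) formula for the expectation of a nonnegative integer-valued random variable. Since $w$ is drawn uniformly from $\Sigma^n$ and $cr(w)$ always lies in $\{1,\dots,n\}$, we have
\[
\mathbb{E}[cr(w)] \;=\; \sum_{\ell=1}^{n} \Pr[cr(w)\ge \ell].
\]
So it suffices to evaluate each tail probability $\Pr[cr(w)\ge \ell]$ and then sum a short series. The bulk of the work is a clean combinatorial characterization of the event $\{cr(w)\ge \ell\}$ in terms of the length-$\ell$ prefix of $w$.

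The key claim I would establish is that for every $\ell$ with $1\le \ell\le n$,
\[
cr(w)\ge \ell \;\iff\; w_1\cdots w_\ell \text{ is non-increasing, i.e.\ of the form } 1^a 0^{\ell-a}.
\]
To prove this, recall $w = 1^s0^t\gamma$ (or $w=1^n$), so $cr(w)=s+t$ marks the end of the first $0$-run, just before the first $1$ that follows it. If the length-$\ell$ prefix is $1^a0^{\ell-a}$, then either it is all $1$s (forcing $s\ge \ell$, hence $cr(w)\ge\ell$) or the first $0$-run reaches at least position $\ell$, so $s+t\ge \ell$. Conversely, if the prefix is not non-increasing then it contains a $1$ at some position $j\le\ell$ preceded by a $0$; the first such $1$ sits at position $s+t+1\le j\le \ell$, giving $cr(w)=s+t\le \ell-1<\ell$. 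The all-$1$s word $1^n$ is consistent with the claim, since all of its prefixes are non-increasing and $cr(1^n)=n\ge\ell$. I expect the main obstacle to be stating and verifying this equivalence with the correct boundary handling of the all-$1$s prefix and of $w=1^n$; everything after it is routine.

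Counting is then immediate: there are exactly $\ell+1$ non-increasing binary strings of length $\ell$ (namely $1^a0^{\ell-a}$ for $a=0,\dots,\ell$), and the remaining $n-\ell$ positions are free, so
\[
\Pr[cr(w)\ge \ell] \;=\; \frac{(\ell+1)\,2^{\,n-\ell}}{2^n} \;=\; \frac{\ell+1}{2^\ell}.
\]
Finally I would evaluate the resulting series by splitting it and using the standard partial sums $\sum_{\ell=1}^{n}2^{-\ell}=1-2^{-n}$ and $\sum_{\ell=1}^{n}\ell\,2^{-\ell}=2-(n+2)2^{-n}$:
\[
\mathbb{E}[cr(w)] \;=\; \sum_{\ell=1}^{n}\frac{\ell+1}{2^\ell}
\;=\; \Big(2-\frac{n+2}{2^n}\Big)+\Big(1-\frac{1}{2^n}\Big)
\;=\; 3-\frac{n+3}{2^n},
\]
as claimed. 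A useful sanity check before committing to the approach is to verify small cases ($n=1$ gives $1$, $n=2$ gives $7/4$), both of which agree with the closed form.
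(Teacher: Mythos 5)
Your proof is correct, and it takes a genuinely different route from the paper's. The paper computes the point probabilities directly: it shows $\Pr[cr(w)=k]=k/2^{k+1}$ for $0<k<n$ (by summing over the possible lengths $s$ of the leading $1$-run) and handles the $n+1$ words of the form $1^s0^{n-s}$ separately, then evaluates $\sum_k k\cdot\Pr[cr(w)=k]$ using the identity $\sum_{k\ge 1}k^2 2^{-(k+1)}=3$ together with the closed form $2^{-n}(n^2+2n+3)$ for the tail of that series. You instead use the tail-sum formula $\mathbb{E}[cr(w)]=\sum_{\ell=1}^n\Pr[cr(w)\ge\ell]$, reduce each tail event to the clean combinatorial statement that the length-$\ell$ prefix is non-increasing (which you verify carefully, including the $1^n$ boundary case), and get $\Pr[cr(w)\ge\ell]=(\ell+1)2^{-\ell}$, after which only the standard partial sums $\sum\ell 2^{-\ell}$ and $\sum 2^{-\ell}$ are needed. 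What each buys: the paper's route yields the full distribution of $cr(w)$ as a byproduct (at the cost of invoking a second-moment-type series and its tail formula without derivation), while yours is more elementary and self-contained, trading the distribution for a direct count of prefixes; both arrive at $3-(n+3)/2^n$, and your small-case checks ($n=1,2$) confirm the constant.
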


\begin{proof}
Let $w=1^s0^t\gamma$, with $\gamma \in 1\{0,1\}^* \cup \{\epsilon\}$. Let $X$ be a random variable with $X = cr(w)$, where $w$ is chosen uniformly at random from $\{0, 1\}^n$. 
We have that $X = n$ if and only if $w = 1^s0^{n-s}$, with $s=0,1,\ldots,n$, so for $n+1$ words. Otherwise $X = k$ for some $0<k < n$, and 

$$Pr(X=k) = \sum_{s=0}^{k-1} p^s(1-p)^{k-s}p = \sum_{s=0}^{k-1} (\frac 12)^{k+1} = \frac{k}{2^{k+1}},$$

\noindent since the probability of having a $1$ is $\frac 12$. Therefore,  

$$ \Exp(X) = \sum_{k=0}^{n} k \cdot Pr(X=k) = 
\sum_{k=0}^{n-1} \frac{k^2}{2^{k+1}} + \frac{n(n+1)}{2^n} = 3 - \sum_{k=n}^{\infty} \frac {k^2}{2^{k+1}} + \frac{n(n+1)}{2^n} = 3 - \frac{n+3}{2^n},$$

\noindent where we have used in the last two equations that $\sum_{k\geq 1} \frac{k^2}{2^{k+1}} = 3$, and 
that the tail of the infinite sum has the closed form $2^{-n}(n^2 + 2n +3)$. 
\end{proof}

{\em Remark: } It can be shown in a similar way that the expected critical prefix length of a randomly chosen infinite binary word is $3$.

\begin{lemma}\label{lemma:Crec}
The sequence $C(n)$ of the sum of $cr(w)$ over all words $w$ of length $n$ obeys the recurrence 
$C(n) = 2C(n-1) + (n+1), \text{ with } C(0) = 0.$
\end{lemma}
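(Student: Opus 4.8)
The plan is to establish the recurrence combinatorially by relating words of length $n$ to words of length $n-1$ via appending a single bit. Every $w\in\{0,1\}^n$ is uniquely of the form $u0$ or $u1$ with $u\in\{0,1\}^{n-1}$, so $C(n)=\sum_{u\in\{0,1\}^{n-1}}(cr(u0)+cr(u1))$. Writing this as $2C(n-1)+\sum_u\Delta(u)$ with $\Delta(u)=(cr(u0)-cr(u))+(cr(u1)-cr(u))$, the whole problem reduces to showing $\sum_u\Delta(u)=n+1$. The guiding observation is that $cr(u)$ depends only on the leading $1$-run and first $0$-run of $u$, so appending a bit can change the critical prefix length only when $u$ has an empty tail, i.e.\ when $u=1^s0^t$ (the all-ones word included).

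I would split $u=1^s0^t\gamma$ of length $n-1$ into three cases. If $\gamma\neq\epsilon$, then $u0$ and $u1$ keep the same leading $1$-run, first $0$-run, and a nonempty remainder, so $cr(u0)=cr(u1)=cr(u)$ and $\Delta(u)=0$. If $\gamma=\epsilon$ but $u\neq1^{n-1}$, then $u=1^s0^t$ with $t\geq1$ and $cr(u)=s+t$; here $u0=1^s0^{t+1}$ gives $cr(u0)=s+t+1$, while $u1=1^s0^t1$ has nonempty tail and hence $cr(u1)=s+t$, so $\Delta(u)=1$. Finally, for $u=1^{n-1}$ we have $cr(u)=n-1$, whereas $u1=1^n$ has $cr=n$ and $u0=1^{n-1}0^1$ has $cr=(n-1)+1=n$, so $\Delta(u)=2$.

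It then remains to count. The words with $\gamma=\epsilon$ are precisely $1^s0^{n-1-s}$ for $0\leq s\leq n-1$; exactly one of them, $1^{n-1}$, is the all-ones word, leaving $n-1$ words with $\Delta=1$ and one word with $\Delta=2$, every other word contributing $\Delta=0$. Hence $\sum_u\Delta(u)=(n-1)\cdot1+1\cdot2=n+1$, giving $C(n)=2C(n-1)+(n+1)$. The base case $C(0)=0$ is immediate since the only length-$0$ word is $\epsilon=1^0$, for which $cr(\epsilon)=0$.

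The one place demanding care is the boundary behaviour of the all-ones word, whose critical prefix length is set separately to $cr(1^n)=n$: one must verify that appending either bit to $1^{n-1}$ raises the critical prefix length by exactly $1$, rather than fixing it (which is what happens for every word whose tail $\gamma$ is already nonempty), and that $1^{n-1}$ is counted once, as the unique all-ones member of the $\gamma=\epsilon$ family, and is neither omitted nor double-counted. As a sanity check, solving $C(n)=2C(n-1)+(n+1)$ with $C(0)=0$ yields $C(n)=3\cdot2^n-(n+3)$, which equals $2^n$ times the expected critical prefix length $3-(n+3)/2^n$ from the preceding lemma.
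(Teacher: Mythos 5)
Your proof is correct and takes essentially the same approach as the paper's: both track how appending a single bit to each word of length $n-1$ changes the critical prefix length, using the same three-way case split (nonempty tail $\gamma$, words $1^s0^t$ with $t\geq 1$, and the all-ones word) and arriving at the same count $(n-1)\cdot 1 + 1\cdot 2 = n+1$. Your write-up is merely a more formalized version, with the $\Delta(u)$ bookkeeping and the closed-form sanity check added.
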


\begin{proof} Consider all strings of length $n-1$ and what happens to their critical prefix length if one character is added. For those $w$ with $cr(w) < n-1$, it just stays the same, and since we get two new strings $w1$ and $w0$, these $cr(w)$ are counted twice. The remaining strings are either of the form$1^s0^{n-1-s}$, with $s=0,\ldots, n-2$, in which case adding a $0$ will increase $cr(w)$ by $1$, and adding a $1$ will not; there are $n-1$ many of these. Else it is $1^{n-1}$, in which case adding a $0$ or a $1$ will increase $cr(w)$ by $1$. So altogether we get $C(n) = \sum_{|w|=n} cr(w) = \sum_{|w|=n-1} 2cr(w) + (n-1) + 2 = 2C(n-1) + (n+1).$
\end{proof}

Incidentally, the sequence $C(n) = 3\cdot 2^n - (n+3)$, is listed as sequence $A095151$ of the OEIS~\cite{oeis}, along with the second-order recurrence $C(n) = 3C(n-1) - 2C(n-2) + 1,$ for $n \geq 3, \text{ and } C(0)=0, C(1) = 2.$ That it also obeys the recurrence of Lemma~\ref{lemma:Crec} can be seen by computing the difference $C(n+1) - C(n)$ and substituting the recursive formula of order $2$ for both. 

\medskip

Next we show that the expected critical prefix of the prefix normal form of a randomly chosen word is $\Theta(\log n)$. Note that this is not the same as the expected critical prefix length of a random prefix normal word, due to the fact that the equivalence class sizes of the prefix normal equivalence vary considerably (see~\cite{FL11}, and Thm.~2 in~\cite{BG19}).

\begin{lemma}\label{lemma:random_pnf_st}
Given a random word $w$, let $w'$ be the prefix normal form of $w$. Then the expected critical prefix length of $w'$ is $\Theta(\log n)$.
\end{lemma}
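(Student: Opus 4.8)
The plan is to express the critical prefix of $w'=\PNF(w)$ directly in terms of $w$, and then to bound its expected length by a tail-sum argument, handling the lower and upper bounds separately.

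First I would pin down the initial runs of $w'$. Since $w'_i=F(w,i)-F(w,i-1)$ and $F(w,i)=i$ holds exactly when $w$ contains the substring $1^i$, the leading $1$-run of $w'$ has length $s'$ equal to $L(w)$, the length of the longest run of $1$s in $w$. Writing $w'=1^{s'}0^{t'}\gamma'$ with $\gamma'\in 1\{0,1\}^*\cup\{\epsilon\}$, the definition of $\gamma'$ means that $cr(w')=s'+t'$ is exactly the largest $k$ for which $F(w,k)=s'=L(w)$; in particular $cr(w')\ge L(w)$ always. For the lower bound this already suffices: taking expectations, $\Exp[cr(w')]\ge\Exp[L(w)]$, and it is classical that the expected length of the longest $1$-run in a uniformly random word of length $n$ equals $\log_2 n-\Theta(1)$, so $\Exp[cr(w')]=\Omega(\log n)$.

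For the upper bound I would write $\Exp[cr(w')]=\sum_{k\ge 1}\Pr[cr(w')\ge k]$ and bound each term. The event $\{cr(w')\ge k\}$ implies that the prefix $w_1\cdots w_k$ contains at most $L(w)$ ones: when $k\ge L(w)$ this is because $cr(w')\ge k$ forces $F(w,k)=L(w)$, which caps the $1$-count of every length-$k$ window, and when $k<L(w)$ it is trivial. Fixing the threshold $m=2\log_2 n$, if $cr(w')\ge k$ and $L(w)\le m$ both hold then this prefix has at most $m$ ones, so $\Pr[cr(w')\ge k]\le\Pr[L(w)>m]+\Pr[S_k\le m]$, where $S_k\sim\mathrm{Bin}(k,1/2)$ counts the ones in $w_1\cdots w_k$. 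A union bound over the at most $n$ possible starting positions of a long run gives $\Pr[L(w)>m]\le n\,2^{-m}\le 1/n$, while for $k\ge 8\log_2 n$ a Chernoff bound gives $\Pr[S_k\le m]\le\Pr[S_k\le k/4]\le e^{-k/16}$. Splitting the sum at $K=8\log_2 n$, the first $K$ terms contribute $\Oh(\log n)$, the tail of the $\Pr[L(w)>m]$ term contributes at most $n\cdot(1/n)=\Oh(1)$ (using $cr(w')\le n$, so terms vanish for $k>n$), and the geometric Chernoff tail contributes $\Oh(1)$. This yields $\Exp[cr(w')]=\Oh(\log n)$, which together with the lower bound gives $\Theta(\log n)$.

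The main obstacle is the upper bound, and inside it the dependence between $L(w)$ and the prefix count $S_k$, since the prefix $w_1\cdots w_k$ is part of the very same random word whose longest run is $L(w)$. The threshold union bound above is precisely what decouples them, at the harmless cost of the term $\Pr[L(w)>m]$; the delicate points are choosing $m$ and $K$ jointly so that both $\Pr[L(w)>m]$ (via the run union bound) and $\Pr[S_k\le m]$ (via Chernoff with $m\le k/4$) are summably small. I would also remark that the degenerate words $0^n$ and $1^n$ have $cr(w')=n$ but total probability $2^{1-n}$, so they contribute only $\Oh(n\,2^{-n})=o(1)$ and do not affect the estimate.
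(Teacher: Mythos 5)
Your proof is correct, but it takes a genuinely different route from the paper's. Both arguments begin with the same key observation, that the leading $1$-run of $w'$ has length $s'=L(w)$, the longest $1$-run of $w$, which immediately yields the $\Omega(\log n)$ lower bound. For the upper bound the paper stays with the decomposition $cr(w')=S'+T'$ and bounds $\Exp(T')$ structurally: if $w$ has more than $s'$ ones, then $t'$ is at most the number of $0$s separating a maximal $1$-run from the next $1$; these $0$s form a single run, hence $O(\log n)$ in expectation, and the degenerate words with at most one $1$-run (only $\Theta(n^2)$ of them) contribute negligibly. You instead bound the tail $\Pr(cr(w')\ge k)$ directly via the characterization that $cr(w')\ge k$ forces every length-$k$ window, in particular the prefix, to carry at most $L(w)$ ones; you decouple $L(w)$ from the prefix weight with the threshold $m=2\log_2 n$ plus a union bound over run positions, and close with a Chernoff bound and a split of the tail sum at $K=8\log_2 n$. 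Your version is longer but more self-contained, and it sidesteps the conditioning subtlety implicit in the paper's claim that the $0$-run adjacent to a maximal $1$-run is ``again $O(\log n)$ in expectation''; the paper's version is shorter and gives a cleaner structural picture of what $t'$ actually measures.
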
 

\begin{proof} 
Let $w' = 1^{s'}0^{t'}\gamma'$, with $\gamma'\in 1\{0,1\}^* \cup \{\epsilon\}$ and the r.v.'s $S'=s'$ and $T'=t'$. It is known that the expected maximum length of a run in a random word of length $n$ is $\Theta(\log n)$~\cite{FlajoletSedgewickBook}. Clearly, $S'$ equals the length of the longest run of $1$'s of $w$, thus $\Exp(S') = \Theta(\log n)$. What about $T'$? Consider a $1$-run of $w$ of maximum length $s'$. If $w$ has more than $s'$ $1$'s, then there is a substring of $w$ consisting of this $1$-run and one more $1$; the number of $0$'s in this substring is an upper bound on $t'$. Since these $0$s form one single run, their number is again $O(\log n)$ in expectation. If $w$ has exactly $s'$ $1$'s, then $w' = 1^{s'}0^{n-s'},$ so $t' = n-s' \leq n$. The number of words with at most one $1$-run is ${n+1 \choose 2}+1$. So we have: 
\begin{align*}
Exp(cr(w')) = Exp(S' + T') = \Theta(\log n) + (1 - \frac{\Theta(n^2)}{2^n})O(\log n) + \frac{\Theta(n^2)}{2^n} n = \Theta(\log n).
\end{align*}
\end{proof}

It is not difficult to see that the number of prefix normal words grows exponentially (just note that $1^{|w|}w$ is prefix normal for every $w$). Balister and Gerke~\cite{BG19} recently proved a conjecture from~\cite{BFLRS17} about the asymptotic number of prefix normal words: 

\begin{theorem}[\cite{BG19}, Thm.~1]\label{thm:BG}
The number of prefix normal words of length $n$ is $2^{n-\Theta(\log^2 n)}$. 
\end{theorem}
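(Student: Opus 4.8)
The plan is to prove matching upper and lower bounds separately, both built on a single reformulation of prefix normality as a \emph{subadditivity} condition on the prefix-weight sequence. Since $F(w,i)=\max_{0\le j\le n-i}\bigl(P(w,j+i)-P(w,j)\bigr)$ and the prefix window ($j=0$) always attains $P(w,i)$, a word $w$ of length $n$ is prefix normal if and only if
\[
P(w,a+b)\le P(w,a)+P(w,b)\qquad\text{for all }a,b\ge 0,\ a+b\le n .
\]
Writing the centred walk $S_i=P(w,i)-i/2$ (a $\pm\tfrac12$ random walk when $w$ is uniform), this is exactly $S_{a+b}\le S_a+S_b$, i.e.\ the prefix displacement of every length is the \emph{maximum} displacement of any window of that length. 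First I would record this equivalence as a lemma; it is what makes the whole counting problem tractable.

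For the upper bound $\pnw(n)\le 2^{n-\Omega(\log^2 n)}$ I would estimate $\Pr[w\text{ is prefix normal}]$ for a uniformly random $w$, since $\pnw(n)=2^n\Pr[\,\cdot\,]$. The key is that domination costs a polynomial factor \emph{at each dyadic scale}, and there are $\Theta(\log n)$ scales. Concretely, for a scale $\ell=2^k$ the prefix of length $\ell$ has displacement $S_\ell$ of typical size $\Theta(\sqrt{\ell})$, whereas the maximum displacement over $\Theta(n/\ell)$ disjoint length-$\ell$ windows is $\Theta(\sqrt{\ell\log n})$; the probability that the prefix beats all of them is $n^{-\Omega(1)}$. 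To turn $\Theta(\log n)$ such events into an honest product I would make them \emph{independent} by assigning each scale its own disjoint witness region of the string, e.g.\ a block of size $\ell\,n^{\delta}$ dedicated to testing scale $\ell$; capping the scales at $\ell\le n^{1/2}$ keeps the total length used below $n$, leaves $n^{\Omega(1)}$ windows per scale, and yields $\Pr[\text{PN}]\le\prod_{k}n^{-\Omega(1)}=2^{-\Omega(\log^2 n)}$.

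For the lower bound $\pnw(n)\ge 2^{n-\Oh(\log^2 n)}$ I would not try to exhibit the words one by one (a uniform ``all-ones'' prefix is too weak, forcing a margin of order $\sqrt{n\log n}$). Instead I would use the partition $2^n=\sum_{w'\in\LPN(n)}|\{v:\PNF(v)=w'\}|$: since the equivalence classes tile $\Sigma^n$ and $\pnw(n)$ is exactly the number of classes, it suffices to bound the \emph{largest} class by $2^{\Oh(\log^2 n)}$, giving $\pnw(n)\ge 2^n/\max_{w'}|\text{class}(w')|$. Bounding the class size is again a subadditive-walk statement: two words share a $\PNF$ iff their maximum-window profiles $F(\cdot,i)$ coincide, and one shows that fixing this profile pins down all but $\Oh(\log^2 n)$ bits of freedom, the free bits living precisely in the low-density regions that the dyadic-scale analysis above identifies.

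The main obstacle is the control of \emph{correlations across scales}. In both directions the clean heuristic ``$\Theta(\log n)$ independent scales, each worth a factor $n^{\pm\Theta(1)}$'' has to be made rigorous despite the fact that overlapping windows at different lengths are highly dependent; the disjoint-witness-region device handles this for the upper bound, but matching the constant in the exponent (so that the same $\Theta(\log^2 n)$ appears on both sides) requires the lower-bound class-size estimate to be tight at the same scales, which is the delicate combinatorial core of the argument.
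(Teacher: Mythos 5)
First, a point of comparison: the paper does not prove this statement at all --- it is imported verbatim as Theorem~1 of Balister and Gerke~\cite{BG19}, so there is no internal proof to measure your attempt against; what follows assesses your argument on its own terms. Your reformulation of prefix normality as subadditivity of $i\mapsto P(w,i)$ is correct and is the right starting point. Your upper-bound sketch (a cost of $n^{-\Omega(1)}$ at each of $\Theta(\log n)$ dyadic scales, decoupled by disjoint witness regions) is heuristically sound, though incomplete as written: even with disjoint witness blocks, the $\Theta(\log n)$ domination events all involve the nested prefix weights $P(w,2^k)$, so you must condition on the prefix and separately control its upward deviations at every scale before the product bound is legitimate.

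The genuine gap is in the lower bound, which is the hard direction. You reduce $\pnw(n)\ge 2^{n-\Oh(\log^2 n)}$ to the claim that the largest prefix-normal-equivalence class has size $2^{\Oh(\log^2 n)}$, asserting that fixing the profile $F(\cdot,i)$ pins down all but $\Oh(\log^2 n)$ bits. This is false, and the present paper itself records the correct order of magnitude: as noted in Section~7 (open problem~1, citing Theorem~2 of~\cite{BG19}), the maximum class size is $2^{n-\Theta(\sqrt{n\log n})}$. Plugging the true maximum into your averaging identity yields only $\pnw(n)\ge 2^n/2^{n-\Theta(\sqrt{n\log n})}=2^{\Theta(\sqrt{n\log n})}$, exponentially weaker than required. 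To salvage an averaging argument you would need to show that \emph{most} of $\Sigma^n$ lies in classes of size $2^{\Oh(\log^2 n)}$, which is essentially equivalent to the theorem itself and is not supplied. The published proof instead exhibits $2^{n-\Oh(\log^2 n)}$ prefix normal words directly, by prescribing a prefix whose weight exceeds $\ell/2$ by roughly $\sqrt{\ell\log n}$ at every dyadic scale $\ell$ --- enough to dominate the maximum over the $\Theta(n/\ell)$ windows of that length --- at an entropy cost of $n^{-\Oh(1)}$ per scale; this is your own multiscale ledger run in reverse. Your remark that a flat $1^s$ prefix forces $s=\Omega(\sqrt{n\log n})$ correctly diagnoses why the naive construction fails, but the remedy is a scale-dependent prefix, not the class-size detour.
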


We will use this theorem to prove an upper bound on $\Exp(cr(w))$ for prefix normal words $w$.  We need the following lemma. 

\begin{lemma}\label{lemma:RR}
Let $t  = o(n)$ and $t\geq 2\log n$, and suppose that $\pnw(n) \geq 2^{n-t}$. Let $Z$ be a random variable taking values $cr(w)$ for prefix normal words $w \in  \LPN(n)$. Then $\Exp(Z) = \Oh(t)$. 
\end{lemma}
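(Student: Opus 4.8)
The plan is to bound the expected critical prefix length $\Exp(Z)$ by splitting the probability that $cr(w) \ge k$ for a random prefix normal word $w$ into two regimes and summing the tail. First I would write $\Exp(Z) = \sum_{k \ge 1} \Pr(Z \ge k)$, where the probability is over the uniform distribution on $\LPN(n)$. Since $\pnw(n) \ge 2^{n-t}$ by hypothesis, for any event $A$ on prefix normal words we have $\Pr_{\LPN(n)}(A) \le |A|/2^{n-t} = 2^t \cdot \Pr_{\Sigma^n}(A)$, i.e. conditioning on being prefix normal blows up probabilities by at most a factor $2^t$ relative to the uniform measure on all of $\Sigma^n$. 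The idea is that for small $k$ this factor is harmless (the tail is trivially at most $1$), while for large $k$ the factor $2^t$ is beaten by the exponentially small probability that a random binary string has critical prefix length $\ge k$.

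Next I would quantify the second regime. Recall from the earlier computation that for a uniformly random $w \in \Sigma^n$ and $0 < k < n$ we have $\Pr(cr(w) = k) = k/2^{k+1}$, so $\Pr_{\Sigma^n}(cr(w) \ge k) = \Oh(k/2^{k})$ (summing the tail $\sum_{j \ge k} j/2^{j+1}$, which is $\Oh(k/2^k)$). Therefore for a prefix normal word, $\Pr(Z \ge k) \le \min\bigl(1,\ 2^t \cdot \Oh(k/2^k)\bigr)$. I would choose the threshold $k_0 \approx 2t$ (or $2t + \Oh(\log t)$, which is where $2^t \cdot k/2^k$ crosses $1$; the $t \ge 2\log n$ condition ensures $k/2^k$ damping dominates the polynomial $k$ factor). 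Then split the sum:
\begin{align*}
\Exp(Z) = \sum_{k=1}^{n} \Pr(Z \ge k) \le \sum_{k=1}^{k_0} 1 \ +\ \sum_{k > k_0} 2^t \cdot \Oh\!\left(\frac{k}{2^{k}}\right) = \Oh(t) + 2^t\cdot \Oh\!\left(\frac{k_0}{2^{k_0}}\right) = \Oh(t).
\end{align*}
The first sum contributes $k_0 = \Oh(t)$, and the second is a geometric-type tail dominated by its first term $2^t \cdot \Oh(k_0/2^{k_0}) = \Oh(k_0) = \Oh(t)$ once $k_0 \ge 2t + \Oh(\log t)$, since then $2^{t-k_0} = \Oh(2^{-t})$ kills even the polynomial factor. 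I should handle the endpoint $k=n$ separately (where $cr(w)$ can equal $n$, corresponding to words $1^s0^{n-s}$), but there are only $n+1$ such binary strings, so this contributes at most $2^t(n+1)/2^n \cdot n$, which under $t = o(n)$ is $o(1)$ and absorbed into the bound.

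The main obstacle, and the place to be careful, is calibrating the crossover point $k_0$ and verifying that the tail sum past it is genuinely $\Oh(t)$ rather than blowing up. The polynomial factor $k$ inside the tail interacts with the $2^t$ multiplier, so I need $k_0$ large enough that $2^t \cdot k_0 / 2^{k_0} = \Oh(1)$ uniformly; the hypothesis $t \ge 2\log n$ is exactly what guarantees that at $k_0 = \Theta(t)$ the exponential decay outweighs the factor $k_0 \le n$, because $2^{-k_0} \le 2^{-2t} \le n^{-2}/\cdots$ makes the product small. The hypothesis $t = o(n)$ is needed so that $k_0 = \Oh(t) < n$ and the split into regimes $k \le k_0$ and $k > k_0$ is meaningful (the critical prefix cannot exceed $n$). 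Beyond this calibration the argument is a routine tail estimate, so I expect the only real work to be stating the geometric tail bound cleanly and pinning down the constant in $k_0$.
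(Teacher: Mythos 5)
Your proposal is correct and follows essentially the same route as the paper: both split at a threshold $\Theta(t)$ (the paper uses exactly $2t$), bound the low regime trivially by the threshold, and bound the high regime by counting binary words with critical prefix at least $2t$ (about $(2t+1)2^{n-2t}$ of them) against the lower bound $\pnw(n)\geq 2^{n-t}$, with $t\geq 2\log n$ absorbing the polynomial factor. Your use of the tail-sum identity $\Exp(Z)=\sum_k \Pr(Z\geq k)$ in place of the paper's direct split of the expectation is only a presentational difference.
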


\begin{proof}
Consider all prefix normal words $w\in \LPN(n)$ with $cr(w) < 2t$.  These contribute $\Oh(t)$ to $\Exp(Z)$. Now consider all prefix normal words $w\in  \LPN(n)$ with $cr(w) \geq 2t$.  There are at most $(2t + 1) 2^{n - 2t}$ binary words with $cr(w) \geq 2t$, since these words must begin with one of the patterns $1^{2t}, 1^{2t-1}0, 1^{2t-2}00, \ldots , 0^{2t}$, and therefore, at most this number of prefix normal words with $cr(w) \geq 2t$.  Each prefix normal word can only contribute at most $n$ to the average.  So the contribution to the average summed over all prefix normal words with $cr(w) \geq 2t$ is at most $n  (2t + 1)  2^{n - 2t}/\pnw(n) \leq n (2t + 1)  2^{n - 2t}/2^{n -t}$, which is $O(1)$, since $t \geq 2 \log n$, and hence negligible: 
\begin{align*}
\Exp(Z) &= \frac{1}{\pnw(n)} \left(\sum_{\substack{w\in {\LPN (n)}, \\cr(w) < 2t}} cr(w) +  \sum_{\substack{w\in {\LPN (n)}, \\cr(w) \geq 2t}} cr(w)\right) 
\leq \frac{|\{w\in \LPN(n) \mid cr(w) < 2t\}| \cdot 2t}{\pnw(n)} +\\&+ \frac{(2t+1) 2^{n-2t} \cdot n }{2^{n-t}} 
\leq \frac{|\{w\in \LPN(n) \mid cr(w) < 2t\}| \cdot 2t}{\pnw(n)} + \frac{(2t+1)\cdot n}{n^2} 
\leq  2t + \frac{(2t+1)}{n} = \Oh(t). 
\end{align*}

\end{proof}

\begin{theorem}\label{thm:expected-critical-prefix}
The expected length of the critical prefix of a prefix normal word of length $n$ is $\Oh(\log^2 n)$. 
\end{theorem}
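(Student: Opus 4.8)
The plan is to combine Theorem~\ref{thm:BG} with the machinery already developed in Lemma~\ref{lemma:RR}. Theorem~\ref{thm:BG} tells us that $\pnw(n) = 2^{n - \Theta(\log^2 n)}$, which in particular gives a lower bound of the form $\pnw(n) \geq 2^{n - t}$ where $t = \Theta(\log^2 n)$. The whole point of phrasing Lemma~\ref{lemma:RR} with a free parameter $t$ was to make this substitution immediate: once we know $\pnw(n) \geq 2^{n-t}$ for a specific $t$, the lemma hands us $\Exp(Z) = \Oh(t)$, where $Z$ is exactly the critical prefix length of a uniformly random prefix normal word of length $n$. So the core of the argument is simply to choose $t = c\log^2 n$ for a suitable constant $c$ (large enough that the lower bound $\pnw(n)\geq 2^{n-t}$ holds for all sufficiently large $n$) and invoke the lemma.

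First I would extract from Theorem~\ref{thm:BG} an explicit constant: since $\pnw(n) = 2^{n-\Theta(\log^2 n)}$, there is a constant $c>0$ such that $\pnw(n) \geq 2^{n - c\log^2 n}$ for all $n$ large enough. Set $t := c\log^2 n$. Next I would verify the two side conditions that Lemma~\ref{lemma:RR} requires of $t$, namely $t = o(n)$ and $t \geq 2\log n$. Both are clear for $t = c\log^2 n$: we have $c\log^2 n = o(n)$ since $\log^2 n$ grows slower than any positive power of $n$, and $c\log^2 n \geq 2\log n$ holds for all sufficiently large $n$ since $\log n \to \infty$. With the hypotheses of the lemma satisfied, Lemma~\ref{lemma:RR} yields $\Exp(Z) = \Oh(t) = \Oh(\log^2 n)$, which is precisely the statement of the theorem.

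There is essentially no obstacle of substance here; the difficulty was front-loaded into Lemma~\ref{lemma:RR} and into the deep result Theorem~\ref{thm:BG} of Balister and Gerke, both of which we are entitled to assume. The one point requiring a little care is purely bookkeeping: the statement of Lemma~\ref{lemma:RR} presumes $\pnw(n)\geq 2^{n-t}$ as a hypothesis, so I must make sure the $t$ I plug in is genuinely large enough that the inequality goes the right way. Because Theorem~\ref{thm:BG} only asserts an asymptotic $\Theta(\log^2 n)$ in the exponent, the constant $c$ must be taken at least as large as the hidden upper constant in that $\Theta$; choosing $c$ generously (and noting that a larger $t$ only weakens the hypothesis $\pnw(n)\geq 2^{n-t}$ while still keeping $t = \Oh(\log^2 n)$) removes any worry. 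Since all asymptotic claims are for $n\to\infty$, restricting to sufficiently large $n$ is harmless.

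In summary, the proof is a two-line deduction: take $t = \Theta(\log^2 n)$ as supplied by Theorem~\ref{thm:BG}, check that this $t$ meets the mild growth conditions of Lemma~\ref{lemma:RR}, and conclude $\Exp(cr(w)) = \Oh(t) = \Oh(\log^2 n)$. The interesting content lives entirely in the preparatory lemma and the cited asymptotic enumeration result.
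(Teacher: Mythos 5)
Your proposal is correct and follows essentially the same route as the paper: extract a constant $c>0$ with $\pnw(n) \geq 2^{n-c\log^2 n}$ from Theorem~\ref{thm:BG}, set $t = c\log^2 n$, and apply Lemma~\ref{lemma:RR}. Your explicit verification of the side conditions $t = o(n)$ and $t \geq 2\log n$ is a small addition the paper leaves implicit, but the argument is the same.
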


\begin{proof} By Theorem~\ref{thm:BG}, we know that there exists a constant $c>0$ such that, for sufficiently large $n$, $\pnw(n) \geq 2^{n-c\log^2 n}$. Applying Lemma~\ref{lemma:RR}, we get $Exp(cr(w)) = \Oh(\log^2n)$, where $w$ ranges over all prefix normal words of length $n$. 
\end{proof}

\section{A Simple Generation Algorithm for Prefix Normal Words}\label{sec:simple}

Our first generation algorithm uses Lemma~\ref{lemma:pnw_basics}: (1) A word is prefix normal if and only if all of its prefixes are prefix normal; (2) if $w$ is prefix normal, so is $w0$, but not necessarily $w1$; and (3) $w1\in \LPN$ if and only if for every suffix $u$ of $w$, the number of ones in $u$ is strictly less than $P(w,|u|+1)$. Words $w\in \LPN$ for which $w1$ is not prefix normal are called {\em extension critical}. Thus, whether a word is extension critical can be tested in linear time in $|w|$. 

We can therefore generate all prefix normal words of length $n$ by iteratively generating all prefix normal words of length $k$, for $k=1,\ldots, n-1$, and extending each one by a 0 if it is  extension critical, or by a 0 and a 1 if it is not. This yields a computation tree whose leaves are precisely the prefix normal words of length $n$. We refer to this algorithm as Simple Generation Algorithm. 

\begin{theorem}
The Simple Generation Algorithm generates all prefix normal words of length $n$ in $O(n)$ amortized time per word. 
\end{theorem}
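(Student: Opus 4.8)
The plan is to bound the total work done across the entire computation tree and divide by the number of leaves, which are exactly the prefix normal words of length~$n$. First I would observe that the tree has $\pnw(n)$ leaves and that each internal node at depth $k$ corresponds to a prefix normal word of length $k$; at such a node the algorithm performs an extension-criticality test, which by Lemma~\ref{lemma:pnw_basics}(3) costs $\Oh(k) = \Oh(n)$ time, and then spawns either one child (if extension critical) or two children (if not). So the total running time is proportional to the number of nodes in the tree times $\Oh(n)$, and the amortized cost per word is $\Oh(n) \cdot (\text{number of nodes}) / \pnw(n)$.

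The crux is therefore to show that the number of nodes in the computation tree is $\Oh(\pnw(n))$, i.e.\ that the tree has only a constant factor more internal nodes than leaves. I would argue this by relating the number of nodes at each level to $\pnw(n)$. The nodes at depth $k$ are precisely the prefix normal words of length $k$, so the total node count is $\sum_{k=0}^{n} \pnw(k)$. Since $1^{\lceil m/2\rceil}$ followed by any prefix normal word of length $\lfloor m/2\rfloor$ is prefix normal (using that prepending a long enough $1$-run never creates a violation), $\pnw$ grows at least geometrically; more directly, from $\pnw(n) = 2^{n-\Theta(\log^2 n)}$ one gets $\pnw(k)/\pnw(n) \le 2^{-(n-k)+\Oh(\log^2 n)}$, and summing the geometric-like series $\sum_{k\le n}\pnw(k)$ yields $\Oh(\pnw(n))$. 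Thus the total number of nodes is $\Oh(\pnw(n))$, giving total time $\Oh(n\cdot\pnw(n))$ and hence $\Oh(n)$ amortized per word.

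A cleaner alternative that avoids crude level-by-level bounds, and which I would actually prefer, is to charge the cost of each internal node directly to its descendant leaves and exploit the critical-prefix structure. Each node does $\Oh(\text{critical prefix length})$ real work rather than $\Oh(n)$: the extension test can be restricted to suffixes up to the critical prefix, and the branching behaviour of the tree is governed by $cr(w)$. This is exactly where Theorem~\ref{thm:expected-critical-prefix} enters: the expected critical prefix length over all prefix normal words of length $n$ is $\Oh(\log^2 n)$, so summing the per-node work and dividing by $\pnw(n)$ gives $\Oh(\log^2 n)$ rather than $\Oh(n)$. Under this sharper accounting the stated $\Oh(n)$ bound is comfortably met, and the theorem follows immediately.

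I expect the main obstacle to be making the node-counting argument rigorous: one must verify that the tree genuinely has $\Oh(\pnw(n))$ internal nodes and not merely $\Oh(n\cdot\pnw(n))$, since a naive bound (at most $n$ ancestors per leaf, each costing $\Oh(n)$) would only give $\Oh(n^2)$ amortized time. Pinning down that $\sum_{k=0}^{n}\pnw(k) = \Oh(\pnw(n))$ cleanly, rather than hand-waving the geometric growth, is the step that requires care, and invoking Theorem~\ref{thm:BG} for the growth rate is the natural way to close that gap.
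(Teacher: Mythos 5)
Your overall architecture matches the paper's: count the internal nodes of the computation tree (which are exactly the prefix normal words of length $k<n$, each costing $\Oh(k)=\Oh(n)$) and reduce everything to showing $\sum_{k\le n}\pnw(k)=\Oh(\pnw(n))$. You correctly identify this sum as the crux — but you do not actually close it, and the mechanism you sketch for closing it fails. From $\pnw(m)=2^{m-\Theta(\log^2 m)}$ alone you only get
\[
\frac{\pnw(k)}{\pnw(n)} \;\le\; 2^{-(n-k)}\cdot 2^{\,c_1\log^2 n - c_2\log^2 k},
\]
and since the upper and lower constants in the $\Theta$ need not coincide, the correction factor can be as large as $2^{\Omega(\log^2 n)}$, i.e.\ superpolynomial; summing the ``geometric-like'' series then gives $2^{\Oh(\log^2 n)}\cdot\pnw(n)$, not $\Oh(\pnw(n))$. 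The paper's proof avoids this trap by arguing structurally rather than termwise from the asymptotic formula: every $w\in\LPN(k)$ yields $w0\in\LPN(k+1)$, and (as a consequence of Theorem~\ref{thm:BG}) more than half of the words of each length are not extension critical, so also yield $w1$; hence $\pnw(n)\ge\tfrac32\pnw(n-1)$, and the geometric sum $\sum_{k<n}\pnw(k)\le 2\pnw(n)$ follows by induction. That ratio argument is the missing idea in your write-up.

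Your ``cleaner alternative'' does not rescue the proof either. The extension-criticality test of Lemma~\ref{lemma:pnw_basics}(3) quantifies over \emph{all} suffixes of $w$, and nothing in the Simple Generation Algorithm justifies restricting it to suffixes of length $cr(w)$ — that optimization belongs to the bubble-language oracle of Section~\ref{sec:gray}, which maintains auxiliary $F$-values across the recursion. Moreover, even granting $\Oh(cr(w))$ work per node, you would still need the node-count bound $\sum_k\pnw(k)=\Oh(\pnw(n))$ before Theorem~\ref{thm:expected-critical-prefix} (which concerns length-$n$ words only) could be brought to bear, so the alternative route inherits the same gap while adding an unjustified claim.
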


\begin{proof}
Notice that an extension critical test is performed in each inner node, taking $O(k)$ time if the node is at depth $k$. An inner node at depth $k$ corresponds to a prefix normal word of length $k$, so the number of tests equals the total number of prefix normal words of length smaller than $n$. From Theorem~\ref{thm:BG} (Balister and Gerke, 2019~\cite{BG19}) it follows that most prefix normal words are not extension critical, in particular more than half of all prefix normal words of a given length can be extended. Therefore, $\pnw(n) \geq \frac 32 \pnw(n-1)$, and by induction $\sum_{k=1}^{n-1} \pnw(k) \leq 2\pnw(n)$, implying that the total time taken by the algorithm to generate all prefix normal words of length $n$ is 
\[ \sum_{k=1}^{n-1} k \cdot \pnw(k) \leq \sum_{k=1}^{n-1} n \cdot \pnw(k) = O(n) \cdot \pnw(n).\]

\end{proof}


\section{Bubble Languages and Combinatorial Generation}\label{sec:bubble}

In this section we give a brief introduction to bubble languages, mostly summarising results from~\cite{RSW12,SW12}. However, our presentation is different in that it  presents the generation of a bubble language as a restriction of an algorithm for generating {\em all} binary words. This view also yields a new characterization of bubble languages in terms of the computation tree of this generation algorithm (Prop.~\ref{obs:tree}).

\begin{definition}[\cite{RSW12,SW12}]
A language ${\cal L} \subseteq \{0,1\}^*$ is called {\em a first-$01$ bubble language} if, for every word $w\in {\cal L}$, exchanging the first occurrence of $01$ (if any) by $10$ results in another word in ${\cal L}$. It is called a {\em a first-10 bubble language} if, for every word $w\in {\cal L}$, exchanging the first occurrence of $10$ (if any) by $01$ results in another word in ${\cal L}$. If not further specified, by {\em bubble language} we mean first-01 bubble. 
\end{definition}

For example, the languages of binary Lyndon words and necklaces are $10$-bubble languages. As was shown in~\cite{RSW12}, a language ${\cal L}\subseteq \{0,1\}^n$ is a bubble language if and only if each of its fixed-weight subsets ${\cal L}(n,d)$ is a bubble language. This implies that for generating a bubble language, it suffices to generate its fixed-weight subsets.

\medskip

Next we consider combinatorial generation of binary strings. 
Let $w$ be a binary string of length $n$, let $d$ be its weight, and let $i_1<i_2<\ldots < i_d$ denote the positions of the $1$s in $w$. 
Clearly, we can obtain $w$ from the word $1^d0^{n-d}$ with the following algorithm: first swap the last $1$ with the $0$ in position $i_d$, then swap the $(d-1)$st $1$ with the $0$ in position $i_{d-1}$ etc. Note that every $1$ is moved at most once, and in particular, once the $k$'th $1$ is moved into the position $i_k$, the suffix $w_{i_k}\cdots w_n$ remains fixed for the rest of the algorithm. 

These observations lead to the {\sc Recursive Swap Generation Algorithm} (Algorithm~\ref{algo:bubble}). 
Starting from the string $1^s0^t\gamma$, it generates recursively all $n$-length binary strings with weight $d$ and fixed suffix $\gamma$, where $\gamma\in 1\{0,1\}^* \cup \{\epsilon\}$. 
The call {\sc RecursiveSwap}($d,n-d,\epsilon$) generates all binary strings of length $n$ with weight $d$.  The algorithm swaps the last $1$ of the first $1$-run with each of the $0$s of the first $0$-run, thereby generating a new string each, for which it makes a recursive call. During the execution of the algorithm, the current string resides in a global array $w$.  The function {\sc Swap}($i,j$) swaps the values stored in $w_i$ and $w_j$. In the subroutine {\sc Visit()} we can print the contents of this array, or increment a counter, or check some property of the current string. Crucially, {\sc Visit()} is called on every string exactly once. 

%

\begin{algorithm}[hbt]
\small 
\caption{Recursive Swap Generation Algorithm to generate all binary strings of length $n$.
\label{algo:bubble}}
\begin{algorithmic}[1]

\Statex

\Procedure{RecursiveSwap}{$s,t, \gamma$}

\If {$s>0$ and $t>0$} 
 \For {$i \gets 1$ {\bf to}  $t$} 
   \State \Call{Swap}{$s$,${s{+}i}$}
   \State \Call{RecursiveSwap}{$s{-}1, i, 10^{t-i}\gamma$}
   \State \Call{Swap}{$s$,${s{+}i}$} 
 \EndFor
 \EndIf
\State \Call{Visit}{}()
\EndProcedure

\Statex

\For{$d\gets 0$ {\bf to } $n$}
  \State \Call{RecursiveSwap}{$d,n-d,\epsilon$}
\EndFor

\end{algorithmic}
\end{algorithm}

Let $T^n_d$ denote the computation tree of {\sc RecursiveSwap}($d,n-d,\epsilon$). As an example, Fig.~\ref{fig:example1} illustrates $T^7_4$ (ignore for now the highlighted words). In slight abuse of notation, in the following we identify a node $v$ with the string it represents. 
 The depth of $T^n_d$ equals $d$, the number of $1$s, while the maximum degree (number of children) is $n-d$, the number of $0$s. Consider the subtree rooted at $v = 1^s0^t\gamma$: its depth is $s$ and the maximum degree of nodes is $t$; the number of children of $v$ itself is exactly $t$, and $v$'s $i$th child is $1^{s-1}0^i10^{t-i}\gamma$. Note that suffix $\gamma$ remains unchanged in the entire subtree; that the computation tree is isomorphic to the computation tree of $1^s0^t$; and 
that the critical prefix length strictly decreases along any downward path in the tree. 
The algorithm performs a post-order traversal of the tree, yielding a listing of the strings of length $n$ with weight $d$, in what is referred to as {\em cool-lex order}~\cite{Wi09,SW12,RSW12}.

\begin{figure}
\begin{center}
\includegraphics[width=1\textwidth]{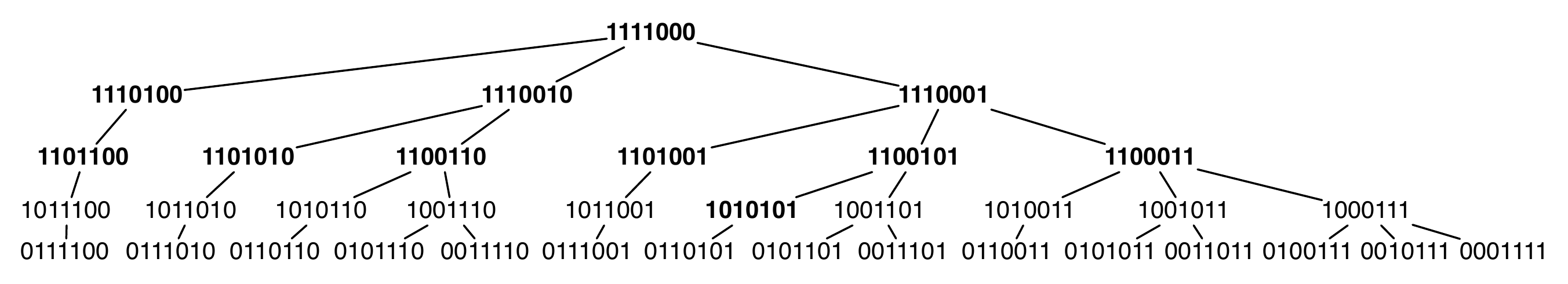}
\caption{\label{fig:example1}The computation tree $T_d^n$ for $n=7,d=4$. Prefix normal words in bold.}%
\end{center}
\end{figure}

We can express the property of bubble language in terms of the computation tree $T_d^n$ as follows:

\begin{proposition}\label{obs:tree}
A language ${\cal L}\subseteq \{0,1\}^n$ is a bubble language if and only if, for every $d=0,\ldots,n$, its fixed-density subset ${\cal L}(n,d)$ is closed w.r.t.\ parents and left siblings in the computation tree $T_d^n$ of the Recursive Swap Generation Algorithm. In particular, if ${\cal L}(n,d) \neq \emptyset$, then it forms a subtree rooted in $1^d0^{n-d}$.
\end{proposition}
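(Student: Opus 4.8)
The plan is to fix a single weight $d$ and prove the equivalence between ``$\mathcal{L}(n,d)$ is a bubble language'' and ``$\mathcal{L}(n,d)$ is closed under parents and left siblings in $T^n_d$''; the statement for all of $\mathcal{L}\subseteq\{0,1\}^n$ then follows from the fact of~\cite{RSW12} quoted above, that $\mathcal{L}$ is a bubble language iff each $\mathcal{L}(n,d)$ is, together with the trivial observation that the first-$01$-to-$10$ swap preserves weight (so it maps $\mathcal{L}(n,d)$ into itself exactly when it maps $\mathcal{L}$ into itself). The first step is to pin down precisely how one bubble step acts on an arbitrary node of $T^n_d$. Writing a word that contains a $01$ in critical-prefix form $w=1^s0^t\gamma$ with $\gamma\neq\epsilon$, and decomposing its suffix as $\gamma=10^a\gamma''$ with $a\geq 0$ and $\gamma''\in 1\{0,1\}^*\cup\{\epsilon\}$, the first occurrence of $01$ sits exactly at the boundary between the $0^t$-block and the leading $1$ of $\gamma$, so the swap yields $\mathrm{bubble}(w)=1^s0^{t-1}10^{a+1}\gamma''$. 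On the tree side, the recursion of Algorithm~\ref{algo:bubble} (together with the already-noted formula ``$v$'s $i$th child is $1^{s-1}0^i10^{t-i}\gamma$'') shows that $w$ is the $t$-th child of $p:=1^{s+1}0^{t+a}\gamma''$, so that the child index equals the length of the child's own $0$-run, and $p$'s $j$-th child is $1^s0^j10^{t+a-j}\gamma''$.

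The crux of the argument is then simply to line up these two explicit formulas. They give a clean dichotomy: if $t\geq 2$ then $\mathrm{bubble}(w)$ equals $p$'s $(t-1)$-th child, i.e.\ the \emph{immediate left sibling} of $w$; and if $t=1$ then $\mathrm{bubble}(w)=1^{s+1}0^{a+1}\gamma''=p$, the \emph{parent} of $w$. The only node not covered by this is the root $1^d0^{n-d}$, whose word contains no $01$ and is therefore fixed by the bubble operation. Consequently, iterating bubble starting from $w$ walks leftward through $w$'s siblings one step at a time, from the $t$-th child down to the $1$st, and then steps up to $p$; hence the set of words reachable from $w$ by repeated bubbling contains precisely all left siblings of $w$ followed by its parent (and thereafter its ancestors).

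From this correspondence both directions are immediate. If $\mathcal{L}(n,d)$ is a bubble language, then for any $w\in\mathcal{L}(n,d)$ an induction on the number of bubble steps places every iterated bubble of $w$ in $\mathcal{L}(n,d)$, and by the walk just described this includes all left siblings of $w$ and its parent; so $\mathcal{L}(n,d)$ is closed under parents and left siblings. Conversely, if $\mathcal{L}(n,d)$ is closed under parents and left siblings, then for any $w\in\mathcal{L}(n,d)$ the dichotomy shows that $\mathrm{bubble}(w)$ is a left sibling (case $t\geq 2$), the parent (case $t=1$), or $w$ itself (the root); in each case $\mathrm{bubble}(w)\in\mathcal{L}(n,d)$, so $\mathcal{L}(n,d)$ is a bubble language.

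For the final ``in particular'' claim I would use only parent-closure: assuming $\mathcal{L}(n,d)\neq\emptyset$, pick any element and iterate the parent map; closure keeps the entire root-to-node path inside $\mathcal{L}(n,d)$ and in particular forces the root $1^d0^{n-d}$ into the set, so every element is joined to the root by a path lying in $\mathcal{L}(n,d)$, which is exactly the statement that $\mathcal{L}(n,d)$ forms a subtree of $T^n_d$ rooted at $1^d0^{n-d}$. The main obstacle throughout is nothing conceptual but rather the careful bookkeeping of the index/$0$-run identification and the suffix decomposition $\gamma=10^a\gamma''$; once the two formulas for $\mathrm{bubble}(w)$ and for the children of $p$ are matched up, the dichotomy and hence the whole proposition fall out.
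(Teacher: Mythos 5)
Your proof is correct and takes the route the paper intends: the paper's own proof is just the one-line assertion that the claim ``follows immediately from the definition of bubble languages,'' and your explicit computation --- identifying $\mathrm{bubble}(1^s0^t10^a\gamma'')$ with the immediate left sibling when $t\geq 2$ and with the parent when $t=1$, via the paper's own formula for the $i$th child of $1^s0^t\gamma$ --- is exactly the bookkeeping that assertion leaves implicit. No gaps.
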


\begin{proof}
Follows immediately from the definition of bubble languages. 
\end{proof}

Using  Prop.~\ref{obs:tree}, the {\sc Recursive Swap Generation Algorithm} can be applied to generate {\em any} fixed-weight bubble language ${\cal L}$, as long as we have a way of deciding, for a node $w=1^s0^t\gamma$, already known to be in ${\cal L}$, which is its {\em rightmost child} (if any) that is still in ${\cal L}$. If such a child exists, and it is the $j$th child $u=1^{s-1}0^j10^{t-j}\gamma$, then the bubble property ensures that all children to its left are also in ${\cal L}$. Thus, line $2.$ in Algorithm~\ref{algo:bubble} can simply be replaced by ``for $i=1,\ldots,j$''. 

The framework provided in~\cite{RSW12,SW12} to list the strings in $\Lang(n,d)$ for a given bubble language $\Lang$, can thus be viewed as a restriction of the {\sc Recursive Swap Generation Algorithm}: 
Given a string $w=1^s0^t\gamma \in \Lang$, compute the largest integer $j$ such that $1^{s-1}0^{j}10^{t-j}\gamma \in \Lang$, in other words, the rightmost child of node $1^s0^t\gamma \in \Lang$ which is still in $\Lang$, called the \emph{bubble upper bound}\footnote{In~\cite{RSW12,SW12}, actually a ``bubble lower bound'' is computed.  Because we feel it simplifies the discussion, here we introduce a related value called the ``bubble upper bound''.  The bubble lower bound is equal to $t$ minus the bubble upper bound.}. 
This simple framework is outlined in Algorithm~\ref{algo:framework} for a given bubble language $\Lang$, where the current word $w=w_1w_2\cdots w_n$ is stored globally.
The function {\sc Oracle}($s,t$) returns the bubble upper bound for $w$ with respect to $\Lang$.    The  membership tester {\sc Member}($\Lang, w)$ returns true if and only if $w \in \Lang$.  The initial call is {\sc GenBubble}($d,n-d$) with $w$ initialized to $1^d0^{n-d}$.  

\begin{algorithm}[hbt]
\small
  \caption{Generic algorithm to list $\Lang(n,d)$ for a given bubble language $\Lang$  in cool-lex order.  \label{algo:framework}}

  \begin{algorithmic}[1]

\Statex  
\Function{Oracle}{$s, t$}

	\State $j \gets 1$
	\While {$j \leq t$ {\bf and}  \Call{Member}{$\Lang, 1^{s-1}0^{j}10^{t-j}\gamma$} }  \ $j \gets j+1$  \EndWhile
	\State \Return $j-1$

\EndFunction
 
\Statex

\Procedure{GenBubble}{$s, t$}

\If {$s > 0$ {\bf and} $t > 0$}  
	\For {$i \gets 1$ {\bf to}  \Call{Oracle}{$s,t$}}    
		\State \Call{Swap}{$s$, ${s{+}i}$} 
		\State \Call{GenBubble}{$s{-}1, i$} 
		\State \Call{Swap}{$s$, ${s{+}i}$} 
	\EndFor
\EndIf
\State \Call{Visit}{}()

\EndProcedure

  \end{algorithmic}
\end{algorithm}

\

It was further shown in~\cite{RSW12} that cool-lex order, the order in which the generic algorithm visits the strings of ${\cal L}(n,d)$, gives a Gray code. This can be seen on the tree $T_d^n$ as follows: 

\begin{lemma}\label{lemma:swap}
Let $u$ be a node in the computation tree $T_d^n$. Then each of the following can be obtained from $u$ by a single swap operation: (a) any sibling of $u$, (b) $parent(u)$, and (c) any node on the leftmost path in the subtree rooted in $u$. 
\end{lemma}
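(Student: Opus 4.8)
The plan is to treat all three cases by a single uniform method: for each target node I write both $u$ and the target in the canonical form $1^a0^b\delta$, locate the positions in which they disagree, and read off the corresponding call to $\swap$. Two things make this work. First, a node $1^s0^t\gamma$ has exactly $t$ children, its $i$th child being $1^{s-1}0^i10^{t-i}\gamma$, produced by the swap $\swap(\cdot,s,s+i)$. Second, two binary strings are related by a single (string-changing) swap precisely when they differ in exactly two positions, one carrying a $1$ where the other carries a $0$ and vice versa. So in each case it suffices to exhibit exactly two disagreeing, oppositely-valued positions.

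Cases (a) and (b) are local and immediate. Write $parent(u)=1^s0^t\gamma$, so that $u$ is its $i$th child $u=1^{s-1}0^i10^{t-i}\gamma$ for some $1\le i\le t$; here the lone $1$ moved out of the first $1$-run sits in position $s+i$. For (b), $\swap(u,s,s+i)$ returns this $1$ to position $s$ and restores $1^s0^t\gamma=parent(u)$. For (a), any sibling is the $i'$th child $1^{s-1}0^{i'}10^{t-i'}\gamma$, which agrees with $u$ everywhere except that its moved $1$ sits in position $s+i'$ instead of $s+i$; hence $\swap(u,s+i,s+i')$ produces it. In both subcases exactly two positions change, so one swap suffices.

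The substance of the lemma is case (c). Writing $u=1^s0^t\gamma$, I will first identify, by induction on $k$, the node $v_k$ reached from $u$ by following leftmost children $k$ times ($0\le k\le s$):
\[
v_k \;=\; 1^{\,s-k}\,0\,1^{\,k}\,0^{\,t-1}\,\gamma .
\]
The base case $v_0=1^s0^t\gamma=u$ is clear. For the step, the canonical representation of $v_k$ has first $1$-run $1^{s-k}$ and first $0$-run equal to the single $0$ immediately following it (when $k\ge1$; for $k=0$ the first $0$-run is $0^t$), so its leftmost child is obtained by $\swap(v_k,s-k,s-k+1)$, turning $v_k$ into $1^{s-k-1}01^{k+1}0^{t-1}\gamma=v_{k+1}$ and completing the induction. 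Comparing $v_k$ with $u=1^s0\,0^{t-1}\gamma$ position by position, the strings agree on positions $1,\dots,s-k$ (all $1$), on positions $s-k+2,\dots,s$ (all $1$), on positions $s+2,\dots,s+t$ (all $0$), and on $\gamma$; they disagree only at position $s-k+1$ ($1$ in $u$, $0$ in $v_k$) and at position $s+1$ ($0$ in $u$, $1$ in $v_k$). Thus $v_k=\swap(u,\,s-k+1,\,s+1)$, and as $k$ ranges over $1,\dots,s$ these swaps exchange the first $0$ of $u$ with each of its leading $1$s in turn.

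I expect the explicit form of the leftmost-path nodes in (c) to be the only real obstacle: a priori $v_k$ interleaves a block of $1$s after a stray $0$ and looks far from $u$ in Hamming distance. The key point, made precise by the induction above, is that the stray $0$ created at the very first leftmost step stays fixed, while each subsequent leftmost step merely slides one more leading $1$ past it; consequently the net change from $u$ is always confined to the two positions $s-k+1$ and $s+1$, independently of $k$. Once this closed form is established, all three claims reduce to inspecting two disagreeing positions.
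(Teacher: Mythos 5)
Your proof is correct and takes essentially the same route as the paper: cases (a) and (b) are handled identically, and for (c) you arrive at the same closed form for the leftmost-path nodes (the paper writes it as $1^k01^{s-k}0^{t-1}\gamma$ with the swap $\swap(u,k+1,s+1)$, which matches your $v_k$ under $k\mapsto s-k$). The only difference is that you justify that closed form by an explicit induction along the leftmost path, whereas the paper asserts it directly from its earlier description of the children of a node in $T_d^n$; your version is slightly more self-contained but not a different argument.
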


\begin{proof}
Let $u$ and $u'$ be siblings, and let their parent be $v = 1^s0^t\gamma$. Then there exist $i,j$ such that $u = 1^{s-1}0^i10^{t-i}\gamma$ and $u'=1^{s-1}0^j10^{t-j}\gamma$. Then $u' = swap(u,s+i,s+j)$, while $v = parent(u) = swap(u,s+i,s)$. For (c), let $u  = 1^s0^t\gamma$ and $u' = 1^k01^{s-k}0^{t-1}\gamma$ for some $k$; then $u' = swap(u,k+1,s+1)$. 
\end{proof}

We now report the main result on bubble languages from~\cite{RSW12,SW12}, for which we give a proof using Prop.~\ref{obs:tree}. 

\begin{proposition}[\cite{RSW12,SW12}] \label{prop:bubble-gray}
Any fixed-length bubble language $\Lang(n)$, where $\Lang(n,d) \neq \emptyset$ for all $d=0,\ldots,n$, can be generated such that subsequent strings differ by at most two swaps, or by a swap and a bit flip. Given a membership tester {\sc Member($\Lang,w$)} which runs in $\Oh(m)$ time, this generation algorithm takes amortized $\Oh(m)$ time per word. 
\end{proposition}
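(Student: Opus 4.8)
The plan is to combine the structural description of $\Lang(n,d)$ as a subtree of $T_d^n$ (Prop.~\ref{obs:tree}) with the single-swap reachability facts of Lemma~\ref{lemma:swap}, and then to charge the running time to the nodes of these subtrees. Recall first that, by Prop.~\ref{obs:tree}, the node set visited by Algorithm~\ref{algo:framework} on input $(d,n-d)$ is exactly $\Lang(n,d)$, forming a subtree of $T_d^n$ rooted at $1^d0^{n-d}$, and that {\sc GenBubble} performs a post-order traversal of it, so every word of $\Lang(n,d)$ is visited once.

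I would establish the Gray code property within a single fixed weight by induction on the subtree, classifying the consecutive pairs of a post-order listing into three types. Two consecutive words lying inside the same child-subtree are covered by the induction hypothesis. For the jump from the last word of the $i$th child-subtree to the first word of the $(i{+}1)$st, I would use that in a post-order traversal the \emph{last} word of any subtree is its root, while the \emph{first} word is the leftmost-path leaf; hence this jump goes from the child $u_i$ to a leftmost-path descendant of its right sibling $u_{i+1}$. By Lemma~\ref{lemma:swap}(a) the siblings $u_i,u_{i+1}$ differ by one swap, and by Lemma~\ref{lemma:swap}(c) the leftmost-path leaf differs from $u_{i+1}$ by one swap, so these two words differ by at most two swaps. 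The remaining jump, from the last child-subtree up to the parent node, is exactly a parent step, hence a single swap by Lemma~\ref{lemma:swap}(b).

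Next I would stitch the fixed-weight listings together in increasing order of $d$. Since post-order visits a subtree's root last, the final word of weight $d$ is $1^d0^{n-d}$; by the same reasoning the first word of weight $d+1$ is the leftmost-path leaf of the subtree rooted at $1^{d+1}0^{n-d-1}$. The strings $1^d0^{n-d}$ and $1^{d+1}0^{n-d-1}$ differ by a single bit flip in position $d+1$, and by Lemma~\ref{lemma:swap}(c) the leftmost-path leaf differs from $1^{d+1}0^{n-d-1}$ by one swap; together this is a swap and a bit flip, as claimed. The hypothesis $\Lang(n,d)\neq\emptyset$ for all $d$ guarantees (via closure under parents, Prop.~\ref{obs:tree}) that each root $1^d0^{n-d}$ indeed lies in $\Lang$, so these transition words genuinely occur in the listing.

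For the running time I would charge all work to the nodes of the traversed subtrees, of which there are $N=\sum_{d}|\Lang(n,d)|=|\Lang(n)|$ in total. At a node $v=1^s0^t\gamma$ the cost beyond $\Oh(c(v))$ constant-time swaps is the call to \textsc{Oracle}$(s,t)$, which invokes \textsc{Member} at most $c(v)+1$ times (it stops at the first rejected child), where $c(v)$ is the number of children of $v$ in $\Lang(n,d)$; each call costs $\Oh(m)$. Summing over all nodes and using $\sum_v c(v)\le N$ (each node is a child of at most one node), the total cost is $\Oh\!\bigl(m\sum_v(c(v)+1)\bigr)=\Oh(mN)$, i.e.\ $\Oh(m)$ amortized per visited word. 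The step requiring the most care is the case analysis of consecutive post-order pairs, specifically correctly identifying the first and last visited word of each subtree so that every transition matches a clause of Lemma~\ref{lemma:swap}; once that is pinned down, the timing bound is a routine tree-counting charge.
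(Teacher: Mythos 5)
Your proposal is correct and follows essentially the same route as the paper's proof: a post-order traversal of the subtree $T_{\mathcal L}$ guaranteed by Prop.~\ref{obs:tree}, a case analysis of $next(u)$ resolved by the three clauses of Lemma~\ref{lemma:swap}, the observation that the last word of weight $d$ is $1^d0^{n-d}$ while the first word of weight $d+1$ is a leftmost-path descendant of $1^{d+1}0^{n-d-1}$ (giving the swap-plus-bit-flip transition), and the charging of the at most $j+1$ membership tests at each node to its $j$ children plus the node itself. Your write-up merely makes the induction and the $\sum_v c(v)\le N$ accounting more explicit than the paper does.
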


\begin{proof}
For a fixed-weight subset ${\cal L}(n,d)$, 
let $T_{{\cal L}}$ denote the subtree of $T_d^n$ corresponding to ${\cal L}(n,d)$. 
Note that in a post-order traversal of $T_{{\cal L}}$, we have:
\[next(u) = 
\begin{cases}
parent(u) & \text{if } u \text{ is rightmost child}\\
\text{leftmost descendant of $u$'s right sibling} & \text{otherwise.}\\
\end{cases}
\]

By Prop.~\ref{obs:tree}, we have that the leftmost descendant of any node in $T_{{\cal L}}$ lies on the leftmost path in $T_d^n$. Thus, by Lemma~\ref{lemma:swap}, $next(u)$ can be reached in one or two swaps. 

By concatenating the lists for weights $0,1,\ldots , n$, the procedure {\sc GenerateAll}($n$)  shown in Algorithm~\ref{algo:all}  will exhaustively list $\Lang(n)$ for a given bubble language $\Lang$.  To see the Gray code property, notice that for any weight $d$, the last string visited is $1^d0^{n-d}$, while the first string visited for the next weight $d+1$ is the leftmost descendant of $1^{d+1}0^{n-d-1}$, i.e.\ a string of the form $1^i01^{d+1-i}0^{n-d-2}$, which is one swap and one bit flip away from $1^d0^{n-d}$. 

For the running time, notice that for $w = 1^s0^t\gamma$, we do at most $j+1$ membership tests, where $j$ is the bubble upper bound for $w$. The $j$ successful tests can be charged to the $j$ children of $w$, while the possible last unsuccessful test can be charged to $w$ itself. 
\end{proof}

%
\begin{algorithm}[hbt]
  
  \caption{A Gray code to exhaustively list $\Lang(n)$ for a given bubble language $\Lang$.\label{algo:all}}

  \begin{algorithmic}[1]
\Statex
\Procedure{GenerateAll}{$n$}

\For {$d\gets 0$ {\bf to} $n$}  
	\State $w_1w_2\cdots w_n \gets 1^d0^{n-d}$
	\State \Call{GenBubble}{$d,n-d$}
\EndFor

\EndProcedure
  \end{algorithmic}
\end{algorithm}

{\em Remark: } It is even possible to give a cyclic Gray code for $\Lang(n)$, by giving the fixed-weight subsets listed first by the odd weights (increasing), followed by the even weights (decreasing). 

\medskip 

The oracle of Algorithm~\ref{algo:framework} applies a simple membership tester to compute the bubble upper bound for given $w\in \Lang$. However, we do not actually need a {\em general} membership tester, since all we want to know is which of the children of a node {\em already known to be in ${\cal L}$} are in ${\cal L}$; moreover, the membership tester is allowed to use other information, which it can build up iteratively while examining earlier nodes. In the next section, we will apply this method to the language of prefix normal words.

\section{A Gray Code for Prefix Normal Words} \label{sec:gray}

In this section, 
we prove that the set of prefix normal words $\LPN$ is a bubble language.   Then using the bubble framework and applying a basic quadratic-time membership tester, we show how to generate all  words in $\LPN(n,d)$ in Gray code order.  By concatenating the lists together for all weights in increasing order, we obtain an algorithm to list $\LPN(n)$ as a Gray code in $O(n^2)$ amortized time per word.  By then providing an enhanced membership tester for prefix normal words specific to the bubble framework, we further show how this Gray code can be generated in $O(\log^2 n)$ amortized time per word.

\begin{theorem} $\LPN$ is a bubble language.
\end{theorem}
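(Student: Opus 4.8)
The plan is to verify the defining property of a first-$01$ bubble language directly from the $F=P$ criterion: take an arbitrary $w\in\LPN$, locate the first occurrence of $01$, swap it to $10$, and prove the resulting word is again prefix normal. First I would dispose of the trivial case: if $w$ has no occurrence of $01$ then $w=1^s0^t$ or $w=1^n$ and there is nothing to check. Otherwise I use the canonical decomposition $w=1^s0^t\gamma$ with $\gamma\in 1\{0,1\}^*\cup\{\epsilon\}$, in which the first $01$ sits exactly at the boundary between the $0$-run and the leading $1$ of $\gamma$, i.e.\ at positions $s+t$ and $s+t+1$. A small but essential observation to record here is that $s\ge 1$: a prefix normal word containing a $1$ must begin with a $1$ (otherwise $F(w,1)=1>0=P(w,1)$), and $\gamma\ne\epsilon$ guarantees $w$ contains a $1$. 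Writing $\gamma=1\gamma'$, the swapped word is $w'=\swap(w,s+t,s+t+1)=1^s0^{t-1}10\gamma'$.

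Next I would compare prefix weights: $P(w',i)=P(w,i)$ for every $i\ne s+t$, while $P(w',s+t)=P(w,s+t)+1$, so $P(w',i)\ge P(w,i)$ throughout. Since $F(w',i)\ge P(w',i)$ trivially, it remains to prove $F(w',i)\le P(w',i)$, i.e.\ that no length-$i$ window of $w'$ carries more than $P(w',i)$ ones. The key structural fact is that swapping one $0$ with the adjacent $1$ changes the weight of a window only when the window contains exactly one of the two positions: windows ending at $s+t$ gain one $1$, windows starting at $s+t+1$ lose one, and all others are unchanged. For windows of the latter two kinds the weight is at most its value in $w$, hence at most $F(w,i)=P(w,i)\le P(w',i)$, so they cause no trouble.

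The main work — and the step I expect to be the only real obstacle — is bounding the weight of the windows ending at position $s+t$, since these are precisely the ones whose weight increases. Such a window of length $i$ starts at position $s+t-i+1$ and lies inside the prefix $1^s0^t$, so its weight in $w$ is $\max(0,\,i-t)$ and its weight in $w'$ is $\max(0,\,i-t)+1$. I would then compare this against the explicit values of $P(w',i)$ on the range $1\le i\le s+t$ (namely $P(w',i)=i$ for $i\le s$, $P(w',i)=s$ for $s\le i\le s+t-1$, and $P(w',s+t)=s+1$), checking the inequality case by case: the tightest case is $i=s+t$, where both sides equal $s+1$, while the case $i\le t$ is exactly where the observation $s\ge1$ is needed. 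Having verified $F(w',i)\le P(w',i)$ for all $i$, I conclude $F(w',i)=P(w',i)$, so $w'\in\LPN$, which establishes the bubble property.
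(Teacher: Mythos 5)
Your proof is correct and follows essentially the same route as the paper's: both verify the bubble property directly by classifying the substrings of $w'$ according to their position relative to the two swapped positions, observing that only windows ending at the swapped-in $1$ can gain weight, and then using the structure of the critical prefix $1^s0^t$ (including the observation $s\ge 1$) to bound those windows. The only cosmetic difference is in that hard case, where the paper shifts the offending window left by one position and appeals to $w$'s prefix normality, while you compute its weight explicitly as $\max(0,i-t)+1$ and compare against the explicit values of $P(w',i)$; both verifications go through.
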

\begin{proof}
Let $w$ be a prefix normal word containing an occurrence of $01$. Let $w'$ be the word obtained from $w$ by replacing the first occurrence of $01$ with $10$. Then $w=u01v$, $w'=u10v$ for some $u,v\in \Sigma^{*}$. Let $z$ be a substring of $w'$. We have to show that $|z|_{1}\le P(w',|z|)$.
 
Note that for any $k$, $P(w,k)\le P(w',k)$. In fact, $P(w',|u|+1) = P(w,|u|)+1$, and for every $k\neq |u|+1$, $P(w,k) = P(w',k)$. Now if $z$ is contained in $u$ or in $v$, then $z$ is a substring of $w$, and thus $|z|_1\le P(w,|z|) \le P(w',|z|)$. If $z=u'10v'$, with $u'$ suffix of $u$ and $v'$ prefix of $v$, then  $|z|_{1}=|u'01v'|_{1}\le P(w,|z|) \le P(w',|z|)$. If $z=0v'$, with $v'$ prefix of $v$, then $|z|_{1}<|1v'|_{1}$, and $1v'$ is a substring of $w$, thus $|z|_{1}\le P(w,|z|)\le P(w',|z|)$. 
Else $z=u'1$, with $u'$ suffix of $u$. We can assume that $u'$ is a proper suffix of $u$. Let $z'$ be the substring of $w'$ of the same length as $z$ and starting one position before $z$ (in other words, $z'$ is obtained by shifting $z$ to the left by one position). Since $u$ does not contain $01$ as a substring, we have $u=1^{n}0^{m}$ for some $n\ge 1, m\ge 0$. If $z'$ is a power of $0$'s, then $|z|_{1}=1$ and the claim holds. Else, $|z|_{1}=|z'|_{1}$, and $z'$ is a substring of $w$. Thus $|z|_{1}\le P(w,|z|) \le P(w',|z|)$. \hfill 
\end{proof}

Since there is a membership tester for prefix-normal words that runs in $O(n^2)$ time, e.g.\ as described in Algorithm~\ref{algo:member}, the aforementioned Gray codes for 
both $\LPN(n,d)$ and $\LPN(n)$ can be generated in $O(n^2)$ amortized time (Prop.~\ref{prop:bubble-gray}).  We show the computation tree $T^7_4$ in Fig.~\ref{fig:example1}, with prefix normal words in bold. The complete listing for $d=0,1,\ldots, 7$ is given in Table~\ref{table:pnw7}. 
%

\begin{algorithm}[hbt]
  \caption{Test if $w_1w_2\cdots w_n \in  \LPN$ in $O(n^2)$ time.\label{algo:member}}

  \begin{algorithmic}[1]
\Statex
\Function{Member}{ $\LPN, \  w_1w_2\cdots w_n$ }

\State $p_0 \gets 0$
\For {$i\gets 1$ {\bf to} $n$}  \ \ $p_{i} \gets p_{i-1} + w_i$
\EndFor
   
\For {$i\gets 2$ {\bf to} $n$} 
  \State $f \gets 0$
   \For {$j\gets i$ {\bf to} $n$} 
        \State $f \gets f + w_j$
        \If {$f > p_{j-i+1}$} \Return \sc{False}  \EndIf
   \EndFor
\EndFor
\State \Return \sc{True}

\EndFunction
  \end{algorithmic}
\end{algorithm}

\subsection{A More Efficient Approach}

Now we develop a more efficient membership tester for $\LPN$ that is specific to one required by an oracle for bubble languages.
In particular, membership tests are only made on strings of the form $w = 1^{s-1}0^{j}10^{t-j}\gamma$,  given that $1^s0^t\gamma \in \LPN$.

\begin{lemma}\label{lemma:isPNF}
Let $w = 1^s0^t\gamma$ be a word in $\LPN$ where $s \geq 0$, $t \geq 1$ and $\gamma \in 1\{0,1\}^* \cup \{\epsilon\}$.  
Let $w' = b_1b_2\cdots b_n =  \swap(w,s,s+j)$ for some $1 \leq j \leq t$.  
Then  $w'$ is {\bf not} in  $\LPN$ if and only if  either
\begin{enumerate}
\item $F(\gamma0^{s+t}, s+j-1) \geq s$, or 
\item  $|b_{s+j} b_{s+j+1} \cdots b_{2(s+j-1)}|_1 \geq s$.
\end{enumerate}
\end{lemma}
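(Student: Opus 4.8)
The plan is to argue directly from the definition of prefix normality, comparing $w'$ to its prefix-normal parent $w$. First I would record that $w'$ and $w$ differ in exactly two places: $w'$ has a $0$ where $w$ has its (last prefix) $1$ at position $s$, and a $1$ where $w$ has a $0$ at position $s+j$, so the isolated $1$ of $w'$ sits at position $s+j$. Consequently $P(w',i)=P(w,i)$ for every $i\notin\{s,\dots,s+j-1\}$, while $P(w',i)=s-1=P(w,i)-1$ for $i\in\{s,\dots,s+j-1\}$; and since $w\in\LPN$ we have $F(w,i)=P(w,i)$ throughout. Thus $w'\notin\LPN$ iff some substring (``window'') $z$ of length $\ell$ satisfies $|z|_1>P(w',\ell)$, and the whole argument becomes a classification of a witness window by its length $\ell$ and its position relative to the isolated $1$.

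For the direction ($\Leftarrow$) I would check that each condition exhibits such a witness. Condition~2 asserts that the length-$(s+j-1)$ window of $w'$ starting at the isolated $1$ has weight $\ge s$; since $P(w',s+j-1)=s-1$, this is already a violation. Condition~1 asserts (after fixing a zero-padding convention so that $F(\gamma 0^{s+t},\cdot)$ is monotone and short windows can be completed to length $s+j-1$) that $\gamma$ contains a window of some length $\ell'\le s+j-1$ with at least $s$ ones; because $s$ ones force $\ell'\ge s$, this window lies in the $\gamma$-part of $w'$ at a length where $P(w',\ell')=s-1$, again a violation. In both cases $w'\notin\LPN$.

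For ($\Rightarrow$) I would take a witness window $I$ of length $\ell$ and split on $\ell$. Lengths $\ell\le s-1$ are impossible, as $P(w',\ell)=\ell$ is maximal. For $\ell\in[s,s+j-1]$ the window carries $\ge s$ ones; a short computation using the intervening $0$-run shows any such window meeting the prefix run $1^{s-1}$ has at most $s-1$ ones, so $\min I\ge s$, placing $I$ at or beyond the isolated $1$. If $I$ lies wholly inside $\gamma$ I obtain Condition~1; if $I$ contains the isolated $1$, then $|I|\le s+j-1$ forces $I\subseteq[s,2(s+j-1)]$, so every one of $I$ lies inside the window of Condition~2, giving Condition~2. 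The remaining case $\ell\ge s+j$ is where $P(w',\ell)=F(w,\ell)$; here the witness can only gain its extra $1$ from the swap, which forces $s+j\in I$, $s\notin I$ (so $\min I=a\ge s+1$), and makes the window $w[I]$ a \emph{maximum}-weight length-$\ell$ window of $w$ whose ones all lie in $\gamma$. Comparing $w[I]$ against the prefix window $w[1..\ell]$, which has the same weight $F(w,\ell)$, by a counting argument shows that the block of $a-1\le s+j-1$ extra symbols of $\gamma$ seen by $I$ but not by the prefix window carries exactly $s$ ones, i.e.\ $\gamma$ has a window of length $\le s+j-1$ with $\ge s$ ones: Condition~1.

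I expect the main obstacle to be precisely this last case $\ell\ge s+j$, since long windows match neither condition on their face; the crux is the difference argument that extracts a \emph{short} heavy sub-window of $\gamma$ from a \emph{long} maximum-weight window of $w$, crucially exploiting that $w$ is prefix normal so that its prefix window also attains $F(w,\ell)$. A secondary, purely bookkeeping difficulty, which I would dispatch by fixing the zero-padding convention once and for all, is keeping the substrings of Conditions~1 and~2 well defined when $2(s+j-1)>n$ and reconciling ``window of length $\le s+j-1$ with $\ge s$ ones'' with the stated quantities $F(\gamma 0^{s+t},s+j-1)$ and $|b_{s+j}\cdots b_{2(s+j-1)}|_1$ via monotonicity of $F$.
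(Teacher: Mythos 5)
Your proposal is correct, and its overall skeleton (classify a violating window of $w'$ by its position relative to the swapped-in $1$ at position $s+j$, after noting that $P(w',i)=P(w,i)-1$ exactly for $i\in\{s,\dots,s+j-1\}$) matches the paper's. The genuine difference is in how the two arguments dispose of \emph{long} witness windows, i.e.\ those of length $\ell\geq s+j$. The paper picks a \emph{shortest} witness $u=b_i\cdots b_{i+m-1}$, which forces $b_i=b_{i+m-1}=1$; then a long witness starting at $i=s+j$ is shown to contain a shorter witness (contradicting minimality, via the overlap of $u$ with the length-$m$ prefix), and a long witness starting at $i>s+j$ is a substring of $w\in\LPN$ and so cannot violate at lengths where $P(w',m)=P(w,m)$ --- in both subcases the length is squeezed into $[s,s+j-1]$ and Conditions~1 or~2 drop out. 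You instead keep an arbitrary witness and handle $\ell\geq s+j$ head-on: the violation forces $s+j\in I$, $s\notin I$, hence $w[I]$ attains $F(w,\ell)=P(w,\ell)$, and sliding $I$ against the prefix window $w[1..\ell]$ shows the $a-1\leq s+j-1$ symbols of $w$ in $I\setminus[1,\ell]$ carry exactly $s$ ones, yielding Condition~1. Both mechanisms are sound; the paper's minimality trick keeps the case analysis short and purely local, while your difference argument is slightly longer but makes explicit \emph{why} a long violation always traces back to a short heavy factor of $\gamma$ (it isolates the role of $w$'s prefix being a maximum-weight window), and it does not need the ``shortest witness'' normalization. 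Your flagged bookkeeping issues (zero-padding when $2(s+j-1)>n$, monotonicity of $F$) are real but minor and are glossed over in the paper's write-up as well.
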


\begin{figure}
\begin{center}
\includegraphics[width=\textwidth]{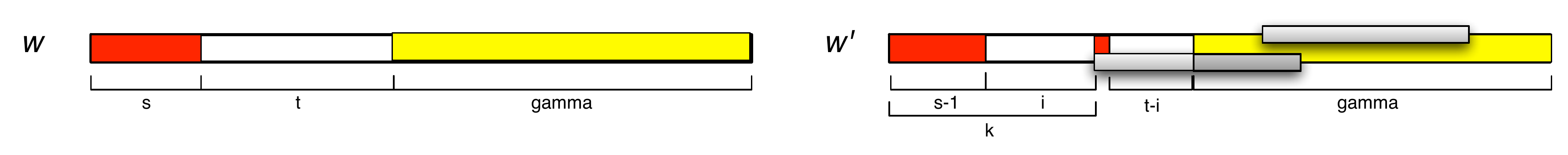}
\end{center}
\caption{Illustration of Lemma~\ref{lemma:isPNF}. On the right we have the two cases of a substring $u$ (in gray) of $w'$ which may violate the prefix normal property.\label{fig:isPNF}}
\end{figure}

\begin{proof}  The proof is illustrated in Fig.~\ref{fig:isPNF}. Note that  $w' = 1^{s-1}0^j10^{t-j}\gamma$. 

($\Leftarrow$) The prefix of $b_1b_2\cdots b_{s+j-1}$ of $w'$ has $s-1$ ones.  If $F(\gamma0^{s+t}, s+j-1) \geq s$, then it must be that $\gamma$ contains a substring of 
length $s+j-1$, or less,  with at least $s$ ones. Similarly, if  $|b_{s+j} b_{s+j+1} \cdots b_{2(s+j-1)}|_1 \geq s$ then $w'$ also contains a substring of length $s+j-1$ with at least $s$ ones.  Thus, if either $F(\gamma0^{s+t}, s+j-1) \geq s$ or $|b_{s+j} b_{s+j+1} \cdots b_{2(s+j-1)}|_1 \geq s$ then $w'$ is not in $\LPN$.

($\Rightarrow$) Assume $w'$ is not in $\LPN$. Then there is a shortest substring $u = b_ib_{i+1}\cdots b_{i+m-1}$ with length $m$ in $w'$ such that  $|u|_1 > P(w',m)$.   Clearly $i>1$ and since $m$ is minimal $b_i = b_{i+m-1} = 1$.  
Suppose $i+m-1 \leq s+j$.  Then $u$ can have at most $s-1$ ones since $i > 1$ and thus $|u|_1 \leq P(w',m)$, a contradiction.  Thus $i+m-1 >  s+j$.  
We now consider three cases for $i$.   If $i < s+j$ then since $b_i=1$, $2 \leq i < s$.  But since $b_1b_2\cdots b_{i-1} = 1^{i-1}$, this means $P(w',m) \geq |u|_1$.  Thus $i \geq s +j$.
Suppose $i=s+j$.  
If $m > s+j-1$  then the prefix of $w'$ of length $m$ overlaps with $u$, i.e.\ we can write $b_1b_2\cdots b_m = vv'$ and $u=v'u'$ for some non-empty $v'$ containing the swapped $1$. Since $|u|_1 > P(w',m)$, this implies that also $u'$ has more $1$s than the prefix of the same length, a contradiction to our choice of $u$.
Thus $m \leq  s+j-1$.  Since $w'$ starts with $1^{s-1}$ and $|u|_1 > P(w',m)$  it must be that $|u|_1 \geq s$.  By extending $u$ to have length $s+j-1$ we have $|b_{s+j} b_{s+j+1} \cdots b_{2(s+j-1)}|_1 \geq s$. 
Finally, suppose $i > s+j$.  Then because $b_i=1$,  $u$ is a substring of $\gamma$ and hence a substring of $w$.
Since $w \in \LPN$ we have $|u|_1 \leq P(w,m)$. 
Since $P(w,m) = P(w',m)$ for all $m<s$ and $m \geq s+j$, and  $|u|_1 > P(w',m)$, it must be that $s \leq m \leq s+j-1$. 
For each of these possible values for $m$, $P(w',m) = s-1$.  Thus $|u|_1 \geq s$ which means $F(\gamma, m) \geq s$.   Finally, since the length
of $\gamma0^{s+t}$ is at least $s+j-1$, we also have $F(\gamma0^{s+t}, s+j-1) \geq s$.  Considering all cases, we must either have  $F(\gamma0^{s+t}, s+j-1) \geq s$, or  $|b_{s+j} b_{s+j+1} \cdots b_{2(s+j-1)}|_1 \geq s$.
\end{proof}

Let $f_i$ denote the value $F(\gamma0^{s+t}, i)$.   By maintaining $f_1, f_2, \ldots, f_{s+t}$ as {\sc GenBubble} iterates through the prefix normal words,  we can apply the previous lemma to optimize a membership tester.  Pseudocode is given in Algorithm~\ref{algo:member2}.  This function requires the passing of the variables $s$ and $j$ from the function {\sc Oracle}, recalling that the current string $w_1w_2\cdots w_n$ is stored globally.

\begin{algorithm}[hbt]
  \caption{Membership testing for $\LPN$ specific to cool-lex framework in $O(s+t)$ time. \label{algo:member2}}

  \begin{algorithmic}[1]
\Statex
\Function{MemberPN}{$s,j$}

    \State $ones \gets 1$  \ \ \  \blue{  \Comment{first 1 accounted for by the (proposed) swap of a 1 to $w_{s+j}$}}
    \For {$i \gets s+j+1$ {\bf to} $2(s+j-1)$}   \If {$w_i  = 1$}  \ $ones \gets ones+1$ \EndIf  \EndFor

    \If { $ones \geq s$ {\bf or}  $f_{s+j-1} \geq  s$}   \Return \sc{False} \EndIf

    \State \Return \sc{True}

\EndFunction
  \end{algorithmic}
\end{algorithm}

The resulting membership tester clearly runs in $O(s+t)$ time since $j \leq t$.  In order to apply the oracle, we must maintain the data structure $f_1, f_2, \ldots, f_{s+t}$ as we proceed through the recursive generation algorithm. 
Since the length of the critical prefix $1^s0^t$ decreases as we go deeper in the computation tree, it is sufficient to update $f_1, f_2, \ldots, f_{s+i}$
as the bits in positions $s$ and $s+i$ get swapped.  Observe that this swap changes $\gamma$ by replacing the prefix $0^{i}$ with 1.  Thus,
to update the $f$ values we can simply scan the first $s+i$ bits (of the updated $\gamma$) as illustrated in {\sc UpdateF}($x$) of Algorithm~\ref{algo:update}, where $x$ corresponds to $s+i$.
This function should be inserted just before the recursive call is made in {\sc GenBubble}($s,t$) on line 9 of Algorithm~\ref{algo:framework}.  Since these values need to be restored \emph{after} the recursive call, we need to first save the initial values $f_1, f_2, \ldots, f_{s+i}$ in a temporary array so they can be restored.
A complete C implementation is given in Appendix A. 

\begin{algorithm}[h]
  \caption{Update required values for $f$ where $x = s+i$.\label{algo:update}}

  \begin{algorithmic}[1]
\Statex
\Procedure{UpdateF}{$x$}

\State  $ones \gets 0$ 	
\For{$k \gets x$ {\bf to}  $2x$}  
	\If{$w_k = 1$}    $ones \gets ones+1$  \EndIf
	\State $f_{k-x+1} \gets$ \Call{Max}{$f_{k-x+1}, ones$}
\EndFor

\EndProcedure
  \end{algorithmic}
\end{algorithm}

\subsection{Analysis}
\label{sec:analysis}

For each prefix normal word $w$ generated by {\sc GenerateAll}($n$), using the optimized membership tester for prefix normal words, the algorithm requires $O(cr(w))$ time to update $f_1, f_2, \ldots, f_{cr(w)}$.  Thus, the overall work done by the algorithm is proportional to $C(n) = \sum_{w \ \in  \ \LPN(n)}  cr(w)  = \pnw(n) \cdot O(\log^2 n), $ by Theorem~\ref{thm:expected-critical-prefix}. We summarize: 

\begin{theorem}
The set of words $\LPN(n)$, where $n > 1$, can be generated in amortized $O(\log^2 n)$ time per word.
\end{theorem}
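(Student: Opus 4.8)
The plan is to combine the algorithmic accounting already set up in the excerpt with the expected critical prefix bound from Theorem~\ref{thm:expected-critical-prefix}. First I would recall that {\sc GenerateAll}($n$) enumerates $\LPN(n)$ by concatenating the cool-lex listings of each fixed-weight subset $\LPN(n,d)$, each produced by {\sc GenBubble} driven by the optimized oracle {\sc MemberPN}. For every prefix normal word $w = 1^s0^t\gamma$ visited, the oracle performs its successful membership tests together with one final unsuccessful test, exactly as in the charging argument of Prop.~\ref{prop:bubble-gray}; crucially, each such test now runs in $O(s+t) = O(cr(w))$ time rather than $O(n^2)$, because {\sc MemberPN} scans only the window $b_{s+j}\cdots b_{2(s+j-1)}$ and consults the precomputed value $f_{s+j-1}$. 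In addition, before each recursive descent the routine {\sc UpdateF}($s+i$) rescans the first $s+i \le cr(w)$ bits to maintain the array $f_1,\ldots,f_{s+t}$, again costing $O(cr(w))$.

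Next I would total this work. As the excerpt already observes, the $j$ successful oracle tests at a node $w$ are charged to its $j$ children and the lone failing test is charged to $w$ itself, so over the whole computation each prefix normal word absorbs $O(cr(w))$ charge from oracle tests, plus $O(cr(w))$ for the {\sc UpdateF} maintenance (saving and restoring the $f$-values into a temporary array is likewise $O(cr(w))$). Hence the cumulative running time to generate all of $\LPN(n)$ is
\[
\sum_{w \in \LPN(n)} O(cr(w)) = O\!\Bigl(\sum_{w \in \LPN(n)} cr(w)\Bigr).
\]
I would then invoke Theorem~\ref{thm:expected-critical-prefix}, which gives $\Exp(cr(w)) = O(\log^2 n)$ when $w$ is drawn uniformly from $\LPN(n)$; equivalently $\sum_{w \in \LPN(n)} cr(w) = \pnw(n)\cdot O(\log^2 n)$. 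Dividing by the number of words $\pnw(n)$ yields amortized $O(\log^2 n)$ time per word, which is the claim.

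I should also confirm the two structural facts that the charging relies on, since both are established earlier. Correctness of the oracle follows from Lemma~\ref{lemma:isPNF}: a child $w' = 1^{s-1}0^j10^{t-j}\gamma$ fails to be prefix normal precisely when $f_{s+j-1} \ge s$ or the window count reaches $s$, which is exactly the test in {\sc MemberPN}. The bubble property of $\LPN$ (shown to be a bubble language above) guarantees, via Prop.~\ref{obs:tree}, that once a child fails all further right siblings also fail, so the oracle's left-to-right scan with a single terminating failure is valid and the charging scheme is sound. The Gray-code structure of the output is inherited verbatim from Prop.~\ref{prop:bubble-gray}.

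The main obstacle is not any single calculation but making sure the amortization is genuinely over $\pnw(n)$ and not over the larger count of internal nodes: the subtlety is that the cost $O(cr(w))$ is paid at \emph{every} node of the computation tree, yet every node of $T_d^n$ restricted to $\LPN$ is itself a prefix normal word, so the sum ranges over $\LPN(n')$ for $n' \le n$ as well. One must therefore either argue (as in the Simple Generation Algorithm, using $\pnw(n) \ge \tfrac32\pnw(n-1)$ from Theorem~\ref{thm:BG}) that the contribution of shorter-length internal nodes is dominated by the length-$n$ leaves, or observe that each internal-node cost is charged directly onto length-$n$ words through the children-charging argument so that only the final bound $\sum_{w\in\LPN(n)} cr(w)$ survives. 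I would favour the latter, cleaner route, and verify carefully that the $f$-array save/restore overhead does not introduce a hidden factor beyond $O(cr(w))$ per node.
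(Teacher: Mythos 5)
Your proposal is correct and follows essentially the same route as the paper: bound the work at each visited word $w$ by $O(cr(w))$ (oracle tests plus the {\sc UpdateF} save/restore), sum over $\LPN(n)$, and apply Theorem~\ref{thm:expected-critical-prefix} to get $\pnw(n)\cdot O(\log^2 n)$ total time. The only slip is in your final paragraph: in the cool-lex computation tree $T^n_d$ \emph{every} node (internal or leaf) is a length-$n$ weight-$d$ word that is output exactly once, so there are no shorter-length internal nodes to worry about and the amortization is directly over $\pnw(n)$.
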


\noindent
Prefix normal words are  the first interesting example of a bubble language for which no $O(1)$ amortized time generation algorithm is known.

\section{Membership Testing}
\label{sec:member}

The best membership testing algorithm uses the fact that a word is prefix normal if and only if it equals its prefix normal form (see Lemma~\ref{lemma:pnw_basics}). As mentioned before, the  most  efficient algorithm for computing the $F$-function of a word $w$, and thus its prefix normal form, is from~\cite{ChanL15} and runs in time $\Oh(n^{1.864})$. Here we present a simple two-phase membership tester which, even though $\Oh(n^2)$ in the worst-case, could outperform other algorithms in practice. 

Now consider the following two-phase approach.  Suppose there is an $O(n)$ test that rejects $X_n$ binary strings of length $n$ (Phase I).   Then, for the remaining $Y_n = 2^n{-}X_n$ strings,  apply the worst case $O(n^2)$ algorithm (Phase II).    On average this will lead to a membership algorithm that runs in time:
\[ \frac{c_1n \cdot X_n + c_2n^2 \cdot Y_n}{2^n},\] 
for some constants $c_1$ and $c_2$.
This expression will be less than $(c_1 + c_2)n $ if $Y_n \leq 2^n/n$, which implies an $O(n)$ average case tester.  Thus, when designing an $O(n)$ time rejection tester in Phase I, we 
are aiming to reject a number proportional to $2^n - 2^n/n$.  Without knowing exactly how many strings get rejected by a particular tester, we can focus  on the following ratio:
$$ ratio = \frac{n Y_n}{2^n}.$$
As $n$ grows, if this ratio is decreasing and bounded by a constant $c$, then $Y_n \leq c 2^n/n$, which implies an $O(n)$ average case time tester.

Applying this approach, we try the following trivial $O(n)$ test for Phase I:  a string will \emph{not} be prefix normal if the longest run of 1s is not a prefix.  Applying this test as the first phase, the resulting ratios for some increasing values of $n$ are given in Table~\ref{tab:ratios}(a).  Since the ratios are increasing as $n$ increases, we require a more advanced rejection test.

The next attempt uses a more compact \emph{run-length} representation for $w$.  Let $w$ be  represented by a series of $c$ blocks, which are maximal substrings of the form $1^*0^*$. Each block $B_i$ is composed of two integers $(s_i, t_i)$ representing the number of 1s and 0s respectively. For example, the string  
11100101011100110
can be represented by $B_1B_2B_3B_4B_5 = (3, 2)(1, 1)(1,1)(3,2)(2,1)$ where $c=5$.  Such a representation can easily be found in $O(n)$ time.  
A word $w$ will \emph{not} be prefix normal word if it contains a substring of the form $1^{i}0^j1^k$ such that
$i+j+k \leq s_1 + t_1$ and $i+k > s_1$ (the substring is not longer, yet has more 1s than the critical prefix). Thus, a word is not prefix normal if for some $2 \leq i \leq c$:
\[s_{i-1} + t_{i-1} + s_i \leq  s_1 + t_1     \text{  \  \ and  \  \ }  s_{i-1} + s_{i} > s_1.\]

By applying this additional test in our first phase, we obtain Algorithm~\ref{algo:member3}.  The ratios that result from this algorithm are given in Table~\ref{tab:ratios}(b).  Similar decreasing ratios also occur for odd $n$.  \\

\begin{table}[h]
\begin{center}
\begin{tabular}{c  | c}
$n$  	&   $ratio$ \\  \hline
10    &   2.500   \\
12    &   2.561    \\
14    &   2.602   \\
16    &   2.631   \\
18    &   2.656  \\
20    &   2.675   \\
22    &   2.693  \\
24    &   2.708  \\
\multicolumn{2}{c }{ }   \\
\multicolumn{2}{c }{(a)}  
\end{tabular}
\ \ \ \ \ \ \ \ \ \ \ 
\ \ \ \ \ \ \ \ \ \ \ 
\ \ \ \ \ \ \ \ \ \ \ 
\begin{tabular}{c  |  c}
$n$  	&   $ratio$ \\  \hline
10    &   2.168  \\
12    &   2.142    \\
14    &   2.121   \\
16    &   2.106   \\
18    &   2.093  \\
20    &   2.083  \\
22    &   2.075  \\
24    &   2.067  \\
\multicolumn{2}{c }{ }   \\
\multicolumn{2}{c }{(b)}  
\end{tabular}
  \caption{(a) Ratios from the trivial rejection test.  \ (b) Ratios by adding secondary rejection test.}
  \label{tab:ratios}

\end{center}

\end{table}

 Since the ratios achieved by the combination of the two tests are decreasing as $n$ increases (see Table~\ref{tab:ratios}(b)), we make the following conjecture: 

\begin{conjecture}
The membership tester {\sc MemberPNF}($w$) for $\LPN$ runs in  $O(n)$-time on average, where the average is taken over all words of length $n$. 
\end{conjecture}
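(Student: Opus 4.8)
The plan is to reduce the conjecture to a purely combinatorial counting statement about the strings that \emph{survive} Phase~I. Recall that Phase~II (the $O(n^2)$ tester) is invoked exactly on the $Y_n$ strings of length $n$ that pass both rejection tests of Phase~I, while the other $X_n = 2^n - Y_n$ strings are disposed of in $O(n)$ time. From the displayed average-time expression, the tester runs in $O(n)$ amortized time precisely when $nY_n/2^n$ stays bounded, i.e.\ when $Y_n = O(2^n/n)$. So everything reduces to proving that the number of length-$n$ strings surviving both tests is $O(2^n/n)$. The genuine prefix normal words cannot be the bottleneck here: by Theorem~\ref{thm:BG} there are only $2^{n-\Theta(\log^2 n)} = o(2^n/n)$ of them. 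The entire difficulty therefore lies in bounding the \emph{false positives} --- non-prefix-normal strings that nevertheless escape both quick tests.

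First I would pass to the run-length representation used in Algorithm~\ref{algo:member3}, writing a string as an alternating sequence $1^{a_1}0^{b_1}1^{a_2}0^{b_2}\cdots$ with $a_i,b_i\ge 1$ (boundary runs excepted). Under the uniform distribution on $\{0,1\}^n$ these run lengths are, up to the usual conditioning on $\sum(a_i+b_i)=n$, independent and geometrically distributed, which makes them amenable to generating-function analysis. I would then characterize the survivors in these terms: passing the first test means $a_1 = \max_i a_i =: m$, and passing the second test means that for every pair of consecutive blocks beyond the first one has the implication ``$a_{i-1}+a_i > m \Rightarrow a_{i-1}+b_{i-1}+a_i > m + b_1$''. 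Crucially this second condition is \emph{local}, depending only on a consecutive triple $(a_{i-1},b_{i-1},a_i)$ together with the fixed thresholds $m$ and $b_1$, which is what opens the door to a transfer-matrix treatment.

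With this setup I would condition on the critical prefix $1^m 0^{b_1}$ and bound the number of admissible suffixes. Two forces shrink this count below $2^{|\gamma|}$: (i) every $1$-run is capped at $m$, which already reduces the per-symbol growth rate to $2-\Theta(2^{-m})$ (the dominant root of the bounded-run recurrence), and (ii) the local test-2 constraint, which I would encode as a transfer matrix on the run alphabet, its spectral radius supplying the additional decay. Summing the resulting bound over all $m$ and $b_1$, weighted by the prefix probability $2^{-(m+b_1)}$, the aim is to show the total is $O(1/n)$. The dangerous regime is $m \approx \log_2 n$, where the bounded-run growth rate is closest to $2$; here one must show that the ``first run is the maximum'' event (already of order $1/n$ times a slowly growing factor, as Table~\ref{tab:ratios}(a) indicates) combined with the test-2 constraint together push the survival probability down to $O(1/n)$.

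The main obstacle, and the reason the statement is only conjectured, is exactly the quantitative control of this interplay. The bounded-$1$-run condition and the test-2 condition are positively correlated, so a clean product bound is not immediate; one needs either a sufficiently sharp estimate of the transfer-matrix spectral radius as a function of $m$, or an inclusion--exclusion/second-moment argument that survives the summation over $m$ near $\log_2 n$. Merely obtaining boundedness of the ratio (hence the $O(n)$ average bound) appears within reach of such an analysis; proving the \emph{monotone decrease} observed empirically in Table~\ref{tab:ratios}(b) would require second-order asymptotics of these generating functions and is likely the genuinely hard part.
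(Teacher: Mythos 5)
You should first be aware that the paper offers no proof of this statement: it is stated explicitly as a conjecture, supported only by the empirical ratios in Table~\ref{tab:ratios}(b) and the surrounding discussion, which reduces the claim to showing $Y_n = O(2^n/n)$ for the number $Y_n$ of strings surviving Phase~I. Your reduction to that same counting statement is correct and matches the paper's own framing, your translation of the two rejection tests into run-length conditions ($a_1=\max_i a_i$, and the local implication on consecutive triples $(a_{i-1},b_{i-1},a_i)$) is an accurate reading of Algorithm~\ref{algo:member3}, and your observation that the genuine prefix normal words contribute only $2^{n-\Theta(\log^2 n)} = o(2^n/n)$ by Theorem~\ref{thm:BG} correctly isolates the false positives as the only issue.

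That said, what you have written is a research plan, not a proof, and you say so yourself. The step that would actually establish the conjecture --- a quantitative bound on the survival probability in the regime $a_1 = m \approx \log_2 n$, where the bounded-run growth rate $2-\Theta(2^{-m})$ gives essentially no decay over a length-$n$ suffix and the entire burden falls on the combination of ``first run is maximal'' with the test-2 constraint --- is named but not carried out. No transfer matrix is written down, no spectral radius is estimated, and the positive correlation between the two constraints that you flag is exactly what blocks the naive product bound. Note also that Table~\ref{tab:ratios}(a) shows the first test alone gives an \emph{increasing} ratio, so the second test must supply a genuine extra factor that you have not quantified. Since the paper leaves this open as well, the honest conclusion is that your proposal is a plausible and reasonably well-targeted attack on an open problem, not a verification of a known result; to turn it into a proof you would need to supply the missing estimate, and to match the empirical monotone decrease of the ratio you would, as you note, need second-order asymptotics on top of that.
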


We note that one can conceive of several other (perhaps more advanced) rejection tests that run in $O(n)$ time, however, these two were sufficient to obtain our desired experimental results.

\begin{algorithm}[t]
  \caption{Membership tester: returns whether or not $w=w_1w_2\cdots w_n \in  \LPN$.\label{algo:member3}}

  \begin{algorithmic}[1]
\Statex
\Function{MemberPNF}{   $w$ }

\State$(s_1,t_1)(s_2,t_2)\cdots (s_c,t_c)  \gets$ the run-length block encoding of $w$

\State  \blue{$\triangleright$  \ \em{Phase I: linear time rejection tests}}
\For {$i\gets 2$ {\bf to} $c$}  
		\If{$s_i > s_1$}  \Return {\sc False}  \EndIf
		\If{$s_{i-1} + t_{i-1} + s_i \leq  s_1 + t_1$    {\bf   and   } $ s_{i-1} + s_{i} > s_1$}  \Return {\sc False}  \EndIf
\EndFor

\State  \blue{$\triangleright$  \ \em{Phase II: call $O(n^2)$ membership tester}}   

\State \Return \Call{Member}{ $\LPN, \  w$ }

\EndFunction
  \end{algorithmic}
\end{algorithm}

\section{Conclusion and Open Problems}\label{sec:conclusion}

The main result of this paper is a generation algorithm for prefix normal words, which is shown to run in amortized $\Oh(\log^2 n)$ time per word, as opposed to the hitherto best $\Oh(n)$ time per word algorithm. Our algorithm is based on the fact that prefix normal words form a bubble language, thus the general framework for bubble languages can be applied, and the algorithm outputs the language as a Gray code. 

We further presented a novel view of bubble languages, in terms of subtrees of the computation tree of a generating algorithm for all binary strings. We hope that this view will aid readers to apply the bubble framework to other binary languages. 
Finally, we gave a membership tester for prefix normal words, which we conjecture to run in average linear time over all binary words. 

We conclude with the following open problems on prefix normal words:
\begin{enumerate}
\item Given a prefix normal word $w$, efficiently list all words with prefix normal form $w$ (i.e., its equivalence class). The maximum size of an equivalence class is listed in the OEIS as sequence A238110~\cite{oeis}. Note that in the recent article~\cite{BG19}, the authors prove that the maximum equivalence class size is asymptotically $2^{n-O(\sqrt{n \log n})}$. 
\item Derive a closed form enumeration formula for the number $\pnw(n)$ of prefix normal words of length $n$, or a generating function for $\pnw(n)$. 
\item Develop an algorithm to exhaustively list all prefix normal words in constant amortized time per word. 
\item Develop a general membership tester for prefix normal words which runs in $o(n^{1.864})$ time in the worst case. 
\end{enumerate}

\subsection*{Acknowledgements}
We thank Frank Ruskey for useful discussions. We further thank the organizers of the Dagstuhl Seminar no.\ 18281~\cite{BarbayDR18281}, which took place in July 2018, 
and which gave two of the authors an opportunity to collaborate on prefix normal words. Gabriele Fici is supported by MIUR project PRIN 2017K7XPAN ``Algorithms, Data Structures and Combinatorics for Machine Learning''.



\small

\newpage
\noindent
\Large
{\bf Appendix A: C code}  
\normalsize

\scriptsize
\begin{code}
//-----------------------------------------------
// COOL-LEX GRAY CODE for Prefix Normal Words
// OEIS:  http://oeis.org/A194850
//-----------------------------------------------
#include <stdio.h>
#define MAX(a,b) ((a > b) ? a : b)

int a[100], F[100], N, NO_OUTPUT=0, COLEX=0;
long long int total = 0;

//---------------------------------
void Visit() {
int i;

	if (!NO_OUTPUT) {
       		for (i=1; i<=N; i++) {
			if (a[i] == 0) printf("1");
			else printf("0");
		}
		printf("\n");    
	}
	total++;
}
//---------------------------------
void Swap(int i, int j) {
int tmp;
    
	tmp = a[i];    a[i] = a[j];     a[j] = tmp;
}
//---------------------------------
int Member_PNF(int s, int j) {
int i, ones=1;

	for (i=s+j+1; i<=2*(s+j-1); i++) if (a[i] == 1) ones++;
	if (ones >= s || F[s+j-1] >= s) return 0;
	return 1;
}
//---------------------------------
int Oracle_PNF(int s, int t) {
int j=1;	

	while (j <= t && Member_PNF(s,j)) j++;
	return j-1;
}
//---------------------------------
int UpdateF(int x) {
int j, ones=0;

	for (j=x; j<=2*x; j++) { 
		if (a[j] == 1) ones++;
		F[j-x+1] = MAX(F[j-x+1],ones);
	}
}
//----------------------------
// COOL LEX GRAY CODE or COLEX
//----------------------------
void Gen(int s, int t) {
int i, j, k, G[100];

	if (COLEX) Visit();	
	if (s > 0 && t > 0) {
		j = Oracle_PNF(s,t);   
		for (i=1; i<=j; i++) {
			Swap(s,s+i);
			for (k=s+i; k<=2*(s+i); k++) G[k-(s+i)+1] = F[k-(s+i)+1];
			
			UpdateF(s+i);
			Gen(s-1,i);
			
			for (k=s+i; k<=2*(s+i); k++) F[k-(s+i)+1] = G[k-(s+i)+1];
			Swap(s,s+i);
		}    
	} 	
	if (!COLEX) Visit();
}   
//---------------------------------
int main( ) {
int i, j, output, D;

	//-------
	// INPUT
	//-------
	printf("\nSELECT output [1]Cool-lex Gray code [2]Co-lex [3]Just counts: ");   
	scanf("
	
	if (output == 2) COLEX = 1;
	if (output == 3) NO_OUTPUT = 1;
		
	printf("ENTER length n: ");    scanf("
	printf("ENTER # of a's (or -1 for all PN words): ");    scanf("
	printf("\n");
	
	for (i=1; i<=N; i++) F[i] = 0;			        
	
	//-----------
	//GENERATION
	//-----------
    	if (D == -1) {
        	for (j=0; j<=N; j++)  {
			for (i=1; i<=j; i++)   a[i] = 1;
			for (i=j+1; i<=2*N; i++) a[i] = 0;	// PAD WITH N 1s
			Gen(j,N-j); 
        	}
    	}
    	else {		
		for (i=1; i<=D; i++)     a[i] = 1;
		for (i=D+1; i<=2*N; i++) a[i] = 0;		// PAD WITH N 1s
		Gen(D,N-D);
	}
	
	printf("Total = 
}
\end{code}

\end{document}